\newtheorem{lemma}{Lemma}
\theoremstyle{definition}
\newtheorem{definition}{Definition}
\newtheorem{remark}{Remark}
\newtheorem{assumptions}{Assumptions}
\newtheorem{notation}{Notation}
\newtheorem{example}{Example}
\newcommand{\R}{\mathbb R}
\DeclareMathOperator{\argmax}{argmax}
\renewcommand{\d}{\mathrm{d}}
\newcommand{\Ppharma}{P_{\text{p}}}
\newcommand{\Pinsur}{P_{\text{i}}}
\newcommand{\muA}{\mathcal{A}}
\newcommand{\muT}{\mathcal{T}}
\newcommand{\muO}{\mathcal{O}}
\newcommand{\tildepi}{\tilde\pi}
\theoremstyle{plain}
\newtheorem{theorem}{Theorem}
\DeclarePairedDelimiter\set\{\}
\pgfplotsset{compat=1.18}
\title{Who Pays, Who Benefits? \\ Producer–Insurer Games in Life-Saving Medicines}
\author{Delia Coculescu\thanks{Department of Mathematical Modeling and Machine Learning, University of Zurich. Email: delia.coculescu@df.uzh.ch}\thanks{Department of Finance, University of Zurich}, Maximilian Janisch\thanks{Institute of Mathematics, University of Zurich. Email: maximilian.janisch@math.uzh.ch}, Thomas Leh\'ericy\thanks{Department of Mathematical Modeling and Machine Learning, University of Zurich. Email: thomas.lehericy@math.uzh.ch}}
\begin{document}
	
	

	%
	
	%

	%

    \maketitle
	\begin{abstract}
		Pharmaceutical markets for life‑saving therapies combine monopoly power with insurance coverage. We build a tractable sequential game in which a patent‑holder chooses the drug price, a profit‑maximising insurer sets its premium, and a population of heterogeneous agents decide whether to insure and, conditional on diagnosis, whether to purchase treatment. Two sufficient statistics—subjective illness probability and reservation price—capture heterogeneity and nest risk-aversion and liquidity‑constraint motives within a unified framework. We prove existence of subgame‑perfect Nash equilibria and show that entry of an insurer strictly raises producer profits but may raise or lower both drug prices and treatment uptake, depending on the joint distribution of the population  statistics. Numerical experiments calibrated to flexible parametric families illustrate non‑monotone comparative statics and quantify conditions under which insurance reduces access. Our results provide benchmarks for evaluating price negotiations, price caps, and subsidy schemes in high‑cost drug markets.
	\end{abstract}
	
	
	
	\section{Introduction}
	Containing the cost of drugs and optimizing health outcome for populations require understanding the strategic interdependencies between the insurance industry, drug producers and patients. This is particularly true in market segments where competition between drug producers is absent, 
	and  fails to provide an incentive towards more affordable drug prices.  
	
	In this paper, we consider a life-threatening disease, such as cancer, and propose a competitive game involving a drug producer, a health insurer, and an infinite population of agents. We assume that the drug market is a monopoly---there is a unique, patent‐protected pharmaceutical product that has no close substitutes. For our model, we propose a leader-follower game, where the producer (acting as a leader) decides on the drug price first, and then the insurer (acting as a follower) responds by proposing an insurance premium to the population of agents. An agent in the population is modelled by certain characteristics which impact their decision-making under uncertainty. We are only interested in health-impacting decisions, such as buying insurance while healthy or following a treatment upon a diagnosis. Hence, once observing the drug price as well as the insurance premium, an agent can decide to buy insurance or not; in case of illness later on, an uninsured agent also gets to choose between buying the treatment out of pocket, or not. For simplicity, we only consider the case where the insurance contract fully reimburses the treatment, without co-payment.
	We study sub-game perfect Nash equilibria, and also equilibria of dictatorial type (where the insurer can impact the supply‐side pricing by threatening to exit the market), including the size of drug and insurance markets in equilibrium.  
	
	Our strategic game has perfect information in the following sense: the producer chooses his actions with full knowledge of the follower's decision criterion, and the insurer reacts after observing the producer's actions; at the end, the population reacts by observing the actions of the producer and insurer, and both the producer and the insurer perfectly anticipate the reaction of the population. However, the two big players (the drug producer and the insurer) do not necessarily have access to agent-level information in the population, but only to some sufficient statistical properties of the population. At the population level, we assume there is randomness and hence decision under uncertainty for each of the agents: randomness concerns an agent's future health status, as well as the drug efficacy in case of treatment, and in addition, uncertainty impacts the knowledge of probabilities of these hazards. In particular, we assume that there is an objective (statistical) probability, for instance the one derived from medical research on the incidence of the condition in a population and on the efficacy of the treatment, as well as subjective (agent-level) distortions of those probabilities by the agents, reflecting their beliefs, risk-aversion, and possibly resultant from additional agent-level information not explicitly modelled (such as risky or preventive lifestyles, genetics, etc). 
	
	Our model is aimed to model markets of drugs for life-threatening illnesses, such as cancer treatment drugs. This allows us for instance to omit moral hazard in the population (i.e., for an agent, the act of buying insurance does not subsequently magnify their probabilities of illness). The model is designed to be able to incorporate statistical properties of a population (such as incidence rate of an illness, socio-economic factors, as well as subjective factors that may be inferred from population surveys) and measures of drug efficacy, such as improvements in life expectancy or quality of life. 
	
	The landscape of drug regulation and the types of insurance plans for cancer drug treatments---or other life-threatening conditions---is extremely diverse from one country to the other, and often involves complex price negotiations. We do not attempt to model a specific country, disease, or treatment. Rather, our theoretical investigation can be seen as an exercise that sheds light on the incentives of the two big players, the producer and the insurer, in the absence of regulation. Our results can then serve as a benchmark in shaping effective regulations or negotiation processes (see Subsection \ref{sect:contribution}).

	\subsection{Characteristics of life-saving treatment markets}
	
	Life‑saving therapies (e.g., gene and cell treatments, orphan‑disease medicines, curative antivirals) are frequently beyond the ability of individuals to pay, representing one of the biggest burdens for modern healthcare and challenging equality and ethics.
	Below, we detail some particularities of these markets (quasi-monopoly and high willingness to pay of patients) as well as their implications on the resulting prices, focusing more specifically on cancer treatments. These aspects are incorporated to some degree into our model. 
	
	\textit{1. Quasi-monopoly of producers.} Cancer treatment markets  can be considered a monopoly (see \cite{SiddRajk12, KiraLe}). Indeed, most cancers are not curable, and treatments may work only for a limited time. When one treatment fails, the patient is treated with subsequent drugs until all options are exhausted. In fact, each drug is expected to be used in all patients during the course of their disease. In addition, new versions of older drugs become replacements and not alternatives, sustaining the monopoly.

	\textit{2. Exceptionally high willingness to pay.} A diagnosis that threatens survival creates in patients near‑inelastic demand. For many cancer drugs demand shows markedly lower price elasticities than other pharmaceutical classes  (\cite{Goldman23}).
	For curative gene therapies and ultra‑orphan drugs, payers have accepted launch prices well above one million per course of treatment\footnote{For instance, the estimated costs of the one-time treatments are 2.8 million for Zynteglo® and 2.2 million for Casgevy.}.

	\textit{3. Cancer treatments are costly, causing inequalities in access.} Launch prices routinely exceed USD 150'000--200'000 pee patient per year (\cite{Miljetal23}), an increasing trend over the last decades (\cite{Howard2015}).  Such prices mean that many cancer patients face severe financial hardship, labeled as ``financial toxicity'' (\cite{Gilligan}). In the United States, over one‑quarter of cancer patients delay, alter, or forgo therapy for cost reasons (see \cite{Ezekiel}); while low- and middle-income countries cover fewer that 15\% of eligible patients (\cite{Adams22, CardoneArnold23}).

	\textit{4. Weak correlation between price and therapeutic value.} Comparative studies across indications and jurisdictions reveal large price spreads that are poorly explained by R\&D cost, mechanism novelty, or incremental survival gains; some six‑figure therapies extend life by only weeks (\cite{Olivieretal21, Vokinger21, MailankodyPrasad, Miljetal23, YuHelmsBach}). On the other hand, the potential market size seems to have a large effect on pharmaceutical innovation \cite{AcemogluLinn04}.
	
	The elements listed above seem to indicate that without an efficient regulation, cancer drug manufacturers set prices based on ``what the market will bear'', exploiting their monopoly position.

	\subsection{Related literature}\label{sect:lit-review}
	Health-economic research has long emphasized how risk, liquidity constraints, and market power jointly shape access to life-saving therapies. Early normative work by \cite{Arrow1963} framed medical care as a market with pervasive uncertainty and justified broad risk‑pooling mechanisms.
	Subsequent positive models, starting with  \cite{Pauly1969},  stressed ex‑post moral‑hazard effects: once insured, patients may consume ``too much'' care. \cite{Nyman1999, Nyman2003}, introduced the ``access motive'': insurance expands welfare by relaxing liquidity constraints, hence is not only a risk reallocation device. While these contributions illuminate demand‑side forces, they all treat treatment prices as exogenous and therefore cannot speak to strategic pricing by manufacturers. 
	
	A second strand endogenises prices but focuses on a single powerful supplier. \cite{LakdawallaSood2009} and \cite{JenaPhilipson2013} show that if an insurer commits to cover a high‑value technology, a monopolist can appropriate much of the surplus, rendering seemingly cost‑effective treatments socially inefficient.  \cite{BESANKO2020102329} extend Nyman’s framework and demonstrate that even partial insurance (with co‑payments) raises monopoly prices; however, the insurer is still modelled as passive -- its premium is not chosen strategically and the population is homogeneous. 
	
	More recent work introduces bargaining or dynamic contracting, but usually abstracts from heterogeneity on the demand side. Examples include: (i) \cite{KyleQian2014} -- price ceilings with parallel trade in the EU; (ii) \cite{DuboisDeMouzonScottMortonSeabright2015} -- sequential entry of generics and price regulation; (iii) \cite{Grennan2013} and \cite{Gaynor2015} -- hospital‑device bargaining and insurer‑hospital bargaining, respectively, illustrating how buyer concentration disciplines prices.
	Although methodologically close (Stackelberg or Nash‑in‑Nash games), these papers study hospital–insurer or cross‑country negotiations rather than the producer--insurer--patient triad central to our approach.
	
	Finally, a growing empirical literature quantifies insurer reactions to pharmaceutical price -- e.g. \cite{DugganScottMorton2010} on Madicare Part D, \cite{EinavFinkelsteinPolyakova2018} on adverse selection in prescription-drug insurance -- but without modelling joint price setting.

	\subsection{Contribution of the present research}\label{sect:contribution}
	We study a life‑saving–drug market in which a monopolist producer and a profit‑maximising insurer interact before patients make coverage and consumption decisions.  The analysis is conducted under \emph{minimal regulation}: the insurer offers full coverage (no cost‑sharing), and the producer is free to set price.  
	Although in principle health insurance represents one of the crucial strategies for achieving social equality in the healthcare system and protect basic health rights, we show that—absent countervailing bargaining power or price controls— it can produce opposite effects.
	
	Our model displays some interesting equilibrium findings.  Comparing subgame‑perfect Nash equilibria with and without an insurer, we establish:
	
	\begin{compactitem}
		\item[-] \textit{Price effect.}  Drug prices \emph{usually} rise once the treatment is covered; nevertheless, calibrated counter‑examples reveal that prices can also fall. A lower price effect is usually not present in models without heterogeneous agents, or  strategic insurer. 
		\item[-] \textit{Access effect.}  Coverage need not increase; the share of diagnosed patients treated may even decline.   
		\item[-] \textit{Distributional effect.}  Producer profit is weakly higher with insurance. As insurer profit is non‑negative it follows that the average patient pays more in expectation—often without gaining better access.  This is contrary to the popular idea that insurance facilitates access to medication. 
		\item[-] \textit{Role of heterogeneity.}  It is the characteristics of the population---such as the level and repartition of the risk aversion, socio-economic variables, subjective probabilities that the individuals attach to a future diagnosis---driving the agents' willingness to pay that determine if the outcome with insurance is worse off or better of than the one without insurance, it terms of access to medication.
	\end{compactitem}
	
	Relative to the existing literature, the paper adds some methodological contributions. :
	\begin{enumerate}
		\item \textit{Endogenous two‑sided pricing.}  Producer price and insurer premium are chosen sequentially by strategic, profit‑seeking agents, not imposed exogenously or set by a passive insurer.
		\item \textit{Parsimonious micro‑foundation of heterogeneity.}  Patient diversity is captured by two sufficient statistics—the subjective illness probability $p$ and the reservation price $\psi$—derived from a dynamic life‑cycle model (Appendix~A).
		\item \textit{Unified positive and normative analysis.}  We prove that producer profits never fall when insurance is offered, whereas prices and access can move in either direction.  The framework therefore provides a benchmark for evaluating price‑negotiation mandates, value‑based subsidies, and other regulatory tools.
	\end{enumerate}
	
	By combining insurer strategy, monopoly power, and micro‑founded heterogeneity in a tractable game‑theoretic setting, the paper bridges the gap between traditional demand‑side insurance models and the growing literature on strategic pharmaceutical pricing.  Numerical results in Sections~\ref{sec:example}–\ref{sect:numerical-results} illustrate the policy trade‑offs under different parameterisations.
	
	These points highlight some key  outcomes; our model can exhibit a wider range of behaviors depending on the specific population characteristics and parameters. We refer the reader to Section~\ref{sec:example} and Section~\ref{sect:numerical-results} for a more detailed exploration of these effects through specific examples.

	\section{The model}\label{sect:math-model}
	
	We present a model for the interaction between two major market players ---a pharmaceutical company and an insurance provider--- serving a large population of rational individuals who represent potential treatment recipients. This population faces a constant incidence rate $r \in[0,1]$ for a specific disease of interest during a reference period, where for simplicity we assume affected individuals are immediately diagnosed, so that $r$ may also be interpreted as the statistical probability of a future diagnosis for currently unaffected individuals.  Also, we assume that agents become affected by the disease independently from other agents (thus excluding infectious diseases).
	Our timeline spans from time 0 (present) to time 1 (e.g., one year later). A multi-period generalization is presented in  Appendix \ref{sec:multiperiod}.  
	
	At the outset (time 0), all individuals are undiagnosed. The pharmaceutical company ---hereafter called the producer--- makes a treatment available for purchase at any point $t \in (0,1]$ at a price $\theta$, which is established at time 0. Concurrently, an insurance company ---hereafter called the insurer--- offers coverage for the treatment cost at a premium $\tilde{\pi} := \pi(\theta)$ for individuals diagnosed during the reference period.
	We only consider here the case where the insurance fully covers the treatment, without co-payment (i.e. the insurance pays the same price as a consumer), and assume the insurance prices are unique for the whole population\footnote{In practice, insurance prices may be dependent on age, or some other factors. Here, we ignore here such aspects. Similarly, we also assume a unique treatment price.}. Based on the observed treatment price $\theta$ and insurance premium $\tildepi$, each individual decides at time 0 whether to purchase insurance. Those who decline insurance coverage may later face a decision upon diagnosis: whether to purchase the treatment directly.
	
	The market interaction is structured as a sequential game at time 0, where the producer first establishes a price $\theta$ for the treatment, followed by the insurer's response in setting the premium. The insurer's strategies (denoted by $\pi$) are functions that map each potential drug price to a corresponding non-negative insurance premium value (denoted by $\tilde{\pi}=\pi(\theta)$). For an observed pair $(\theta, \tildepi)$, each agent individually is taking a health-impacting decision, during the time interval $[0,1]$, as detailed next. 
	
	\subsection{The population of agents}\label{sect:Customers}
	
	We assume that each agent in a large population  observes at time 0 the couple $(\theta,\tilde \pi)$, corresponding to the price of treatment and insurance premium. Each agent makes a choice (that is, a succession of actions during the reference period) among the three below: 
	\begin{center}\small
		\begin{tabular}{c|l|l}
			\textit{Choice} & \textit{At time 0} & \textit{Within time interval (0,1], upon diagnosis}\\
			\texttt{T}& no action & buy treatment at price $\theta$ \\
			\texttt{A}& buy insurance at price $\tilde\pi$ & receive treatment, reimbursed by insurance\\
			\texttt{O} & no action & no treatment
		\end{tabular}
	\end{center}
	Relevant for the subsequent analysis is the fraction of agents in each of the following three non-overlapping groups:  the agents that strictly prefer insurance  coverage at time 0, agents that decline coverage and prefer buying the treatment out-of-pocket, and, finally, the remainder of the agents, that will have no access to treatments, should they become ill. The identities of the agents are irrelevant for the insurer-producer game. 
	
	Therefore, we propose a reduced-form, yet sufficiently general formalization of the decision-making patterns in the population.  In  practice, it  can be obtained from more complex settings, some of which will be emphasized below. 
	In the reduced-form setting, an agent is represented by an element $ (p,\psi)\in [0,1]\times \mathbb{R}_+$:
	\begin{compactitem}
		\item[-] The first dimension $p$ represents an individual's subjective probability assessment at time 0 of being diagnosed during the reference period $(0,1]$, conditionally on being undiagnosed at time 0. This personal probability estimate may differ from the actual incidence rate $r$.
		\item[-] The second dimension $\psi$ represents an individual's reservation price for the treatment if diagnosed. This value indicates the maximum amount an individual is willing or able to pay out-of-pocket for the treatment upon diagnosis, should a diagnosis occur during the reference period, possibly  discounted to reflect its present value at time 0. 
	\end{compactitem}
	In such a reduced-form parameterization,  entire population is represented by a probability distribution $\mu$ defined on the parameter space $[0,1] \times \mathbb{R}_+$ (equipped with the Borelian sigma-algebra).
	
	Let us demonstrate through a basic example how the reservation price $\psi$ can be derived from a more comprehensive parameter space. This basic illustration presented in Example \ref{example1} is expanded in Appendix \ref{sec:multiperiod}, where we develop a multi-period framework covering an individual's entire lifespan,  demonstrating that our seemingly simple reduced-form framework is actually quite versatile and broadly applicable.

	\begin{example}\label{example1}
		
		If the reference period is short enough (e.g. one year) we may assume that the only uncertainty faced by an agent comes from the health status and drug efficacy, while the wealth status is known. Consider a specific agent labeled agent $a$, with wealth $w_a\in \mathbb R_+$ and assume that the agent attributes the probability $p_a$ to becoming sick and the probability $q_a$ to the success of the treatment if sick: $p_a$ and $q_a$ are subjective assessments that may differ from the real probabilities $r$ and $q$. Assume for simplicity that in case of disease and without treatment, agents lose the initial wealth entirely (as for instance in case of death), while if sick and treated they lose a proportion $\ell_a$ of their wealth---this may also account for the lost quality of life after recovery. The reservation price is then 
		\[
		\psi_a:=w_a q_a (1-\ell_a).
		\] 
		Hence, the reservation price may include subjective measures of therapeutic value $(q_a,\ell_a)$, that may be correlated to the actual one, personal personal fas well as personal financial factors. The expected wealth at time $0$ under choice $\texttt{O}$ is $W^{\texttt{O}}_a = w_a(1-p_a)$. 
		This ends our example.
	\end{example}

	As the reservation price $\psi_a$ represents the monetary value the agent assigns to receiving the treatment *if* diagnosed, then, at time $0$, the agent computes the expected value gain from the possibility of receiving treatment as $p_a\psi_a$. Denoting  $W^{\texttt{O}}_a$ the baseline expected wealth of agent $a$ under choice $\texttt{O}$ (take no action), that includes expected losses from illness, we can express the total expected wealth for agent $a$ (with parameters $(p_a, \psi_a)$) associated with each choice as follows:
	
	\begin{center}\small
		\begin{tabular}{c|l}
			\textit{Choice }& \textit{Expected wealth for agent $a$}\\
			\texttt{T}& $W^{\texttt{O}}_a + p_a \psi_a - p_a \theta$ \\
			\texttt{A}& $W^{\texttt{O}}_a + p_a \psi_a - \tilde\pi$\\
			\texttt{O} & $W^{\texttt{O}}_a$
		\end{tabular}
	\end{center}
	Note that $\theta$ and $\tilde\pi$ are market prices, that will be determined in equilibrium.

	We assume agents prefer a choice that leads to higher expected payoff under their distorted probabilities, and are indifferent between choices that lead to a same expected payoff. 
	Denote the (strict) preference relation of agent $a$ by $\succ_a$. 
	The preferences of agent $a$ over the set of choices $\{\texttt{T},\texttt{A},\texttt{O}\}$ are characterized as follows:
	\begin{itemize}
		\item $\texttt{O} \succ_a \texttt{T }$ $\Leftrightarrow$ $\psi_a < \theta$;
		\item $\texttt{A} \succ_a \texttt{O} $ $\Leftrightarrow$ $p_a \psi_a > \tilde\pi$;
		\item $\texttt{A} \succ_a \texttt{T} $ $\Leftrightarrow$ $p_a \theta> \tilde\pi$.
	\end{itemize}
	We emphasize that the conditions above are in line with a sequential decision process of the agents. Indeed, a decision of whether to buy insurance is taken at time 0, while a decision to buy or not the treatment is only faced conditionally on being diagnosed during the time interval [0,1] and by those agents that did not previously buy insurance. It can be checked that conditionally on being diagnosed, an uninsured agent $a$ that buys the treatment out of pocket has an expected future wealth of $\psi_a-\theta$, hence the agent strictly prefers to buy the treatment if $\psi_a>\theta$ (that is a deterministic condition). Hence, at time 0, an agent strictly prefers to buy insurance if $\tildepi<p_a(\theta\wedge \psi_a)$.

	Our two-dimensional parameter space provides a sufficient characterization of the population, which allows to determine preferences of agents over the set of choices $\{\texttt{T},\texttt{A},\texttt{O}\}$. 
	
	We can graphically represent the preferred choices of the agents by drawing (at most) three sub-regions of the parameter space, corresponding to the optimal choice among $\{\texttt{T},\texttt{A},\texttt{O}\}$, of an agent characterized by those parameters:
	\begin{itemize}
		\item Region $O(\theta,\tilde \pi) $: the subset of the parameter space representing agents for which choice $\texttt{O}$ is weakly preferred to choices $\texttt{A}$ and $\texttt{T}$, that is:
		\[
		O(\theta,\tilde \pi)=\{(p, \psi)\in[0,1]\times \mathbb R_+: \psi\le\theta, \; p\psi \le\tilde\pi\}
		\]Hence, for any agent $a$: 
		$(p_a,\psi_a)\in O(\theta,\tilde\pi)\Leftrightarrow (\texttt{O}\succeq_a \texttt{A}\text{ and }\texttt{O}\succeq_a \texttt{T}).$
		\item Region $A(\theta,\tilde \pi)$: the subset of the parameter space representing agents for which choice $\texttt{A}$ is strictly preferred to $\texttt{O}$ and $\texttt{T}$, that is
		\[
		A(\theta,\tilde \pi)=\{(p, \psi)\in[0,1]\times \mathbb R_+: p\theta>\tilde\pi, \; p\psi >\tilde\pi\}
		\]Hence, for any agent $a$: 
		$(p_a,\psi_a)\in A(\theta,\tilde\pi)\Leftrightarrow (\texttt{A}\succ_a \texttt{O} \text{ and }\texttt{A}\succ_a \texttt{T}).$
		\item Region $T(\theta,\tilde \pi)$: the subset of the parameter space representing agents for which choice $\texttt{T}$ is preferred strictly to $\texttt{O}$ and weakly to $\texttt{A}$, that is
		\[
		T(\theta,\tildepi)=\{(p, \psi)\in[0,1]\times \mathbb R_+: \psi>\theta, \; p\theta \le\tilde\pi\}
		\]Hence, for any agent $a$: 
		$(p_a,\psi_a)\in T(\theta,\tilde\pi)\Leftrightarrow (\texttt{T}\succ_a \texttt{O}\text{ and }\texttt{T}\succeq_a \texttt{A}).$
	\end{itemize}

	\begin{remark}
		Implicitly, there is a convention for indifferent  choices.  For instance, an agent that strictly prefers buying an insurance rather than buying the treatment out of the pocket if sick, but is indifferent between buying insurance and choice $\texttt{O}$, is assumed to belong to the region $O(\theta, \tildepi)$. 
		For fixed $\tilde\pi$ and $\theta$, the set of agents who are indifferent between two choices is always a Lebesgue set of measure $0$. Therefore, whenever  $\mu$ has a density, this convention has no impact on the resulting Nash equilibria. 
	\end{remark}
	
	\begin{notation}\label{not1} The probability measure $\mu$ on $[0,1]\times\mathbb R_+$ denotes  the distribution of agents' characteristics in the plane $(p,\psi)$. Also, we denote: 
		\begin{align*}
			\muT(\theta,\tildepi)&= \mu(T(\theta,\tildepi)),\\
			\muA (\theta,\tildepi)&= \mu(A(\theta,\tildepi)), \\
			\muO (\theta,\tildepi)&= \mu(O(\theta,\tildepi)).
		\end{align*}
		The dependence on  $\theta$ and $\tildepi$  will be  sometimes skipped,  if there is no confusion.	
	\end{notation}
	Figure \ref{fig:decision-zones} shows in the plane $[0,1]\times\mathbb R_+$ the three regions, $\muT,\;\muA,\; \muO$, for $\theta=1$ and $\tildepi=0,1$, while Figure \ref{fig2} (right) quantifies $\muO$ for different ranges of the couple drug price--insurance price. The underlying  distribultion $\mu$ considered is a product of a beta and an exponential, is pictured in the left hand subfigure.
	\begin{figure}\label{fig1}
		\begin{center}
			\begin{tikzpicture}
			\begin{axis}[
		xmin = 0, xmax = 1,
		ymin = 0, ymax = 1.5,
		xtick distance = 0.1,
		ytick distance = 0.1,
		grid = both,
		minor tick num = 4,
		major grid style = {lightgray},
		minor grid style = {lightgray!25},
		width = 0.8\textwidth,
		height = 1.2\textwidth]
		
		\addplot [domain=0:1, samples=10, blue] ({0.1},{1+x});
		\addplot [domain=0.1:1, samples=50, blue] ({x},{0.1/x});
		\addplot [domain=0:0.1, samples=10, blue] ({x},{1});
\end{axis}
\end{tikzpicture}
		\end{center}
		\caption{In abscissa, the subjective probability $p_a$, and in ordinate $\psi_a$. Here $\theta = 1$ and $\tildepi = 0.1$. Bottom left, $O$, top left, $T$, and top right $A$. If $\tildepi>\theta$ $O=\emptyset$  and only the regions $T $ and $O$ remain.}
		\label{fig:decision-zones}
	\end{figure}
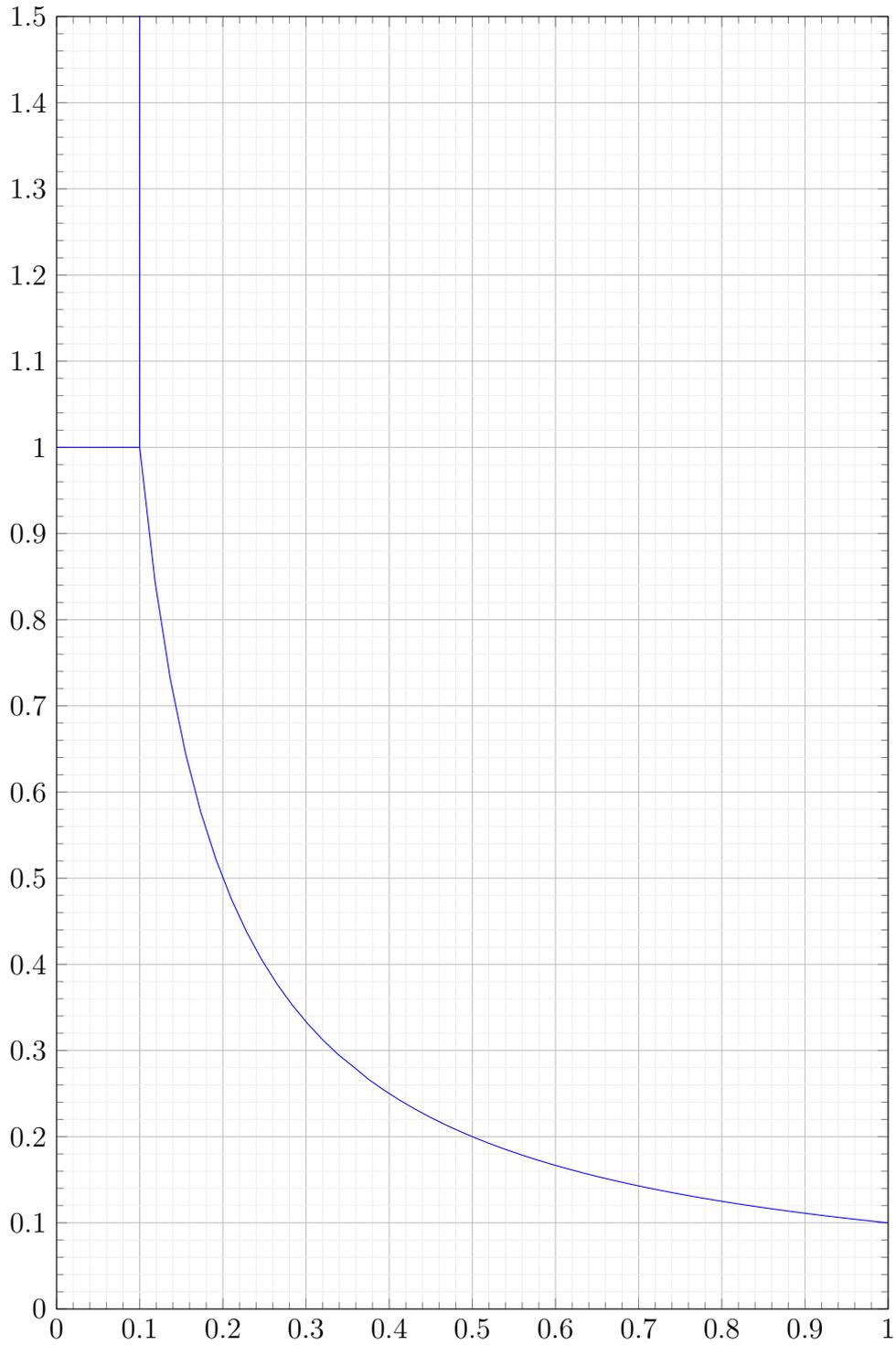\\
	\begin{figure}
		\begin{center}
			\includegraphics[scale=0.5]{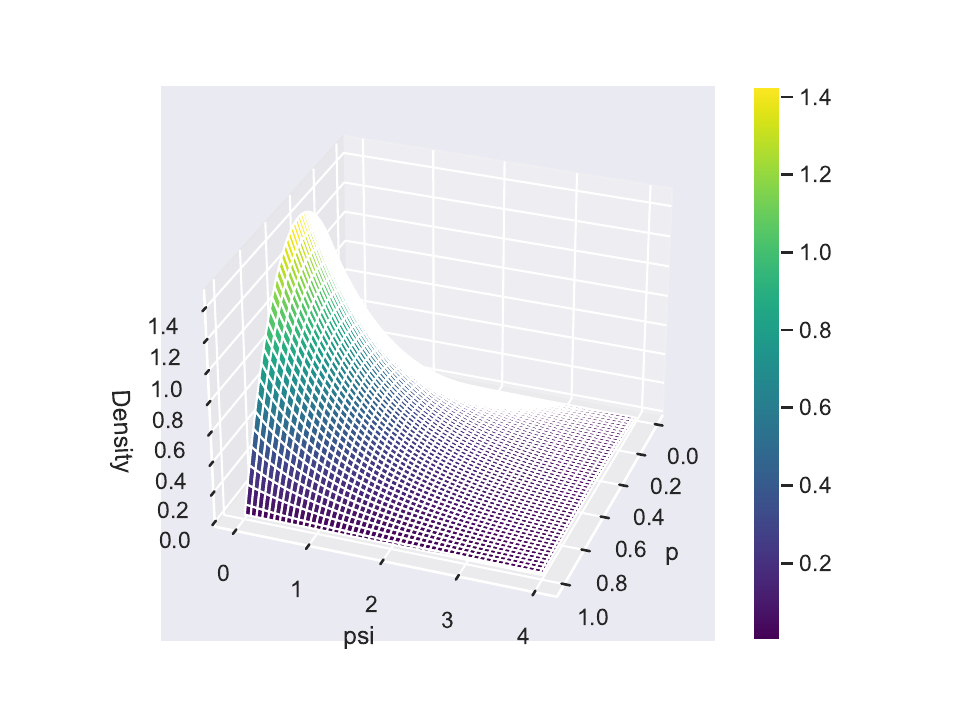}\includegraphics[scale=0.53]{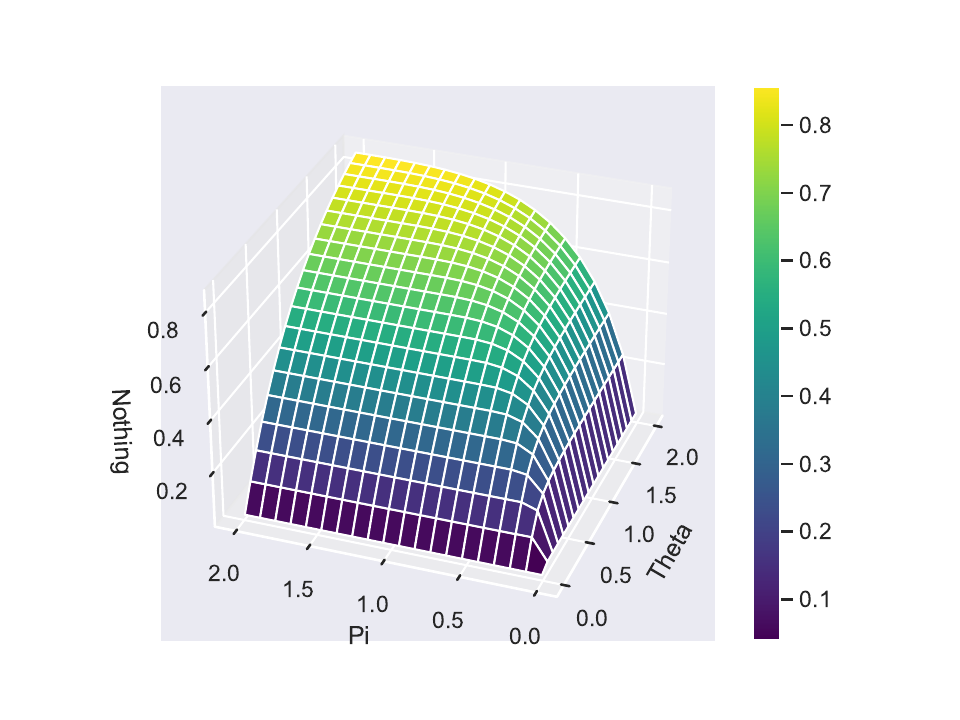}
		\end{center}
		\caption{Left:  population density function $\mu= \text{Beta}(2,2)\otimes\text{Exp}(1)$ in the space of characteristic $(p,\psi)$. Right: For this $\mu$,  the proportion of agents without drug access $\muO$,  for various price levels $(\theta, \pi)$.}
		\label{fig2}
	\end{figure}

	\subsection{The producer-insurer game}\label{sect:pharma-company-insurance}
	
	The game, played a time 0, is as follows. First, the producer chooses an action $\theta\in\mathbb R_+$,  interpreted as the price to be paid for obtaining the treatment. Second, the insurer reacts with a price $\tildepi\in\mathbb R_+$,  interpreted as the premium to be paid by an agent at time $0$, for an insurance covering the cost of the treatment in case of disease. Third, the population reacts to the price-premium couple $(\theta,\tildepi)$, as discussed in Subsection \ref{sect:Customers}. 
	Thus, the set of strategies for the producer is $\mathbb R_+$, and the set of strategies of the insurer is $\R_+^{\R_+}$, i.e., the set of \emph{functions} from $\R_+$ to $\R_+$. A couple $(\theta,\tildepi)\in\R_+^2$ is an action vector and a couple $(\theta,\pi)\in\R_+\times \R_+^{\R_+} $ is a strategy vector. We now define the \emph{payoffs} for the two players.
	
	\begin{definition}Let $\mathcal S:=\R_+\times \R_+^{\R_+}$ be the set of all strategy vectors of the game. Using Notation \ref{not1}, we define the payoff vector of the game  as the map $\mathcal S\to \R^2$, $(\theta,\pi)\mapsto (P_p(\theta,\pi(\theta)), P_i(\theta,\pi(\theta)))$, where:
		\begin{equation*}
			\Ppharma(\theta,\tildepi):=\theta r\left(\muA(\theta,\tildepi) + \muT(\theta,\tildepi)\right)
		\end{equation*} 
		represents the producer profit function, while  the insurer profit function is given as:
		\begin{equation*}
			\Pinsur(\theta,\tilde\pi):=(\tildepi-\theta r)\muA (\theta,\tildepi).
		\end{equation*}
	\end{definition}
	
	\begin{remark}[Zero cost assumption]
		Note that these profit formulas implicitly assume zero marginal costs for manufacturing, distribution, marketing, etc., for both the producer and the insurer. Incorporating positive costs is generally straightforward. For example, one could introduce marginal costs (say, $c_p \ge 0$ for the producer per treatment sold and $c_i \ge 0$ for the insurer per policy sold), leading to profit functions like $\Ppharma(\theta,\tildepi) = (\theta - c_p) r (\muA + \muT)$ and $\Pinsur(\theta,\tildepi) = (\tildepi - \theta r - c_i) \muA$. However, other cost structures (e.g., including fixed costs or costs dependent on factors other than volume) could also be modeled depending on the specific context. While such adjustments would change the specific quantitative results and potentially the details of the theorems presented later, they would not alter the fundamental structure of the game-theoretic model itself. For simplicity in this exposition, we proceed with the zero-cost assumption.
	\end{remark}
	
	
	\begin{definition}[Nash equilibrium]\label{def:Nash equilibrium}
		A pair of strategies $(\theta,\pi)\in\mathcal S$ is called a \emph{Nash equilibrium} if and only if we have 
		\begin{equation*}
			\Ppharma(\theta',\pi(\theta))\le \Ppharma(\theta,\pi(\theta)),\text{ for all }\theta'\in\R_+
		\end{equation*}
		and 
		\begin{equation}\label{eq:optimality-insurance}
			\Pinsur(\theta,\pi'(\theta))\le \Pinsur(\theta,\pi(\theta)),\text{ for all }\pi'\in\R_+^{\R_+} .
		\end{equation}
	\end{definition}
	\begin{remark}
		An equivalent condition to \eqref{eq:optimality-insurance} is
		$
		\Pinsur(\theta,\tildepi) \le\Pinsur(\theta,\pi(\theta))\text{ for all }\tildepi\in\R_+ .
		$
	\end{remark}
	
	\begin{remark}[The order of players matters]\label{rem:ordered-game}
		Consider an alternative game where both the producer and the insurer announce a price, without observing the move of the other player: the set of strategy vectors is then $(\R_+)^2$. In this case, a \emph{Nash equilibrium} is a pair of strategies $(\theta,\tildepi)\in\R_+\times\R_+$ such that $\Pinsur(\theta,\tildepi)\le \Pinsur(\theta, \pi)$ for all $\tildepi\in\R_+$, and $\Ppharma(\tilde\theta, \pi)\le \Ppharma(\theta,\pi)$ for all $\tilde\theta\in\R_+$. But for $\tildepi$ fixed, if $r>0$ and as long as $\muA(\theta_0,\pi)>0$ for some $\theta_0\in\R_+$, the producer would always be incentivized to let $\theta$ go to $\infty$ (note that $\muA(\theta,\tildepi)$ is increasing in $\theta$), so that there is no Nash equilibrium in this case. Alternatively, if $\tildepi$ is such that $\muA(\theta_0,\tildepi)=0$ for all $\theta_0$, then there are no agents that buy insurance, independently of the premium. In conclusion, the alternative game where neither the insurance nor the producer can react to the actions of the other, leads to either a trivial game (no equilibria) or a degenerate game in which the insurer is not a player.
	\end{remark}

	\begin{definition}[Subgame-perfect equilibrium]
		\label{def:subgame perfectness}
		A Nash equilibrium $(\theta,\pi)$ is called \emph{subgame-perfect} if
		\begin{equation}\label{eq:subgame perfect}
			\pi(\theta)\in\argmax_{\tildepi\in\R_+} \Pinsur(\theta,\tildepi)\quad \text{for \emph{all} } \theta\in\R_+.
		\end{equation}
	\end{definition}
	In Figure \ref{fig:3D_plot_colored} we plot in the plane $(p,\psi)$ the  population reaction, corresponding to a sub-game perfect Nash equilibrium. The  population is assumed to be distributed according  to the product of a $\mu= \mathrm{Beta}(2,2)\otimes \mathrm{Exp}(1)$, same as the one depicted Figure \ref{fig2} (right). 
	
	\begin{figure}
		\centering
		\includegraphics[scale=0.3]{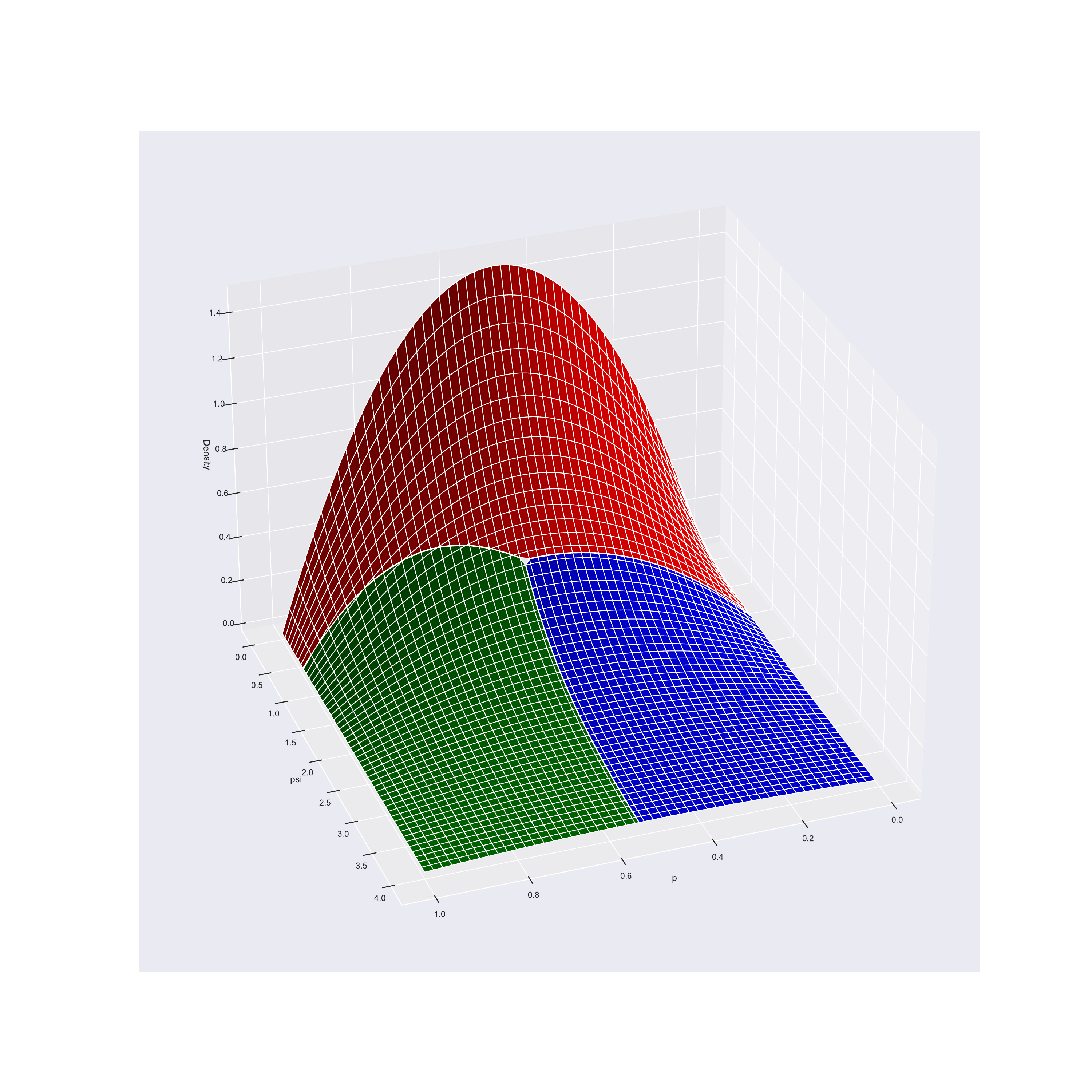}
		\caption{Subgame-perfect Nash equilibrium for $\mu= \mathrm{Beta}(2,2)\otimes \mathrm{Exp}(1)$.  The green region represents $\muA^*$ (the insured population), the blue region represents $\muT^*$ (uninsured agents that will buy the treatments out-of pocked upon diagnosis), and the red region represents $\muO^*$, the population without access to treatment.
		}
		\label{fig:3D_plot_colored}
	\end{figure}
	
	\begin{remark}[Subgame-imperfect and dictatorial Nash equilibria]
		Not all Nash equilibria in our model are necessarily subgame-perfect. A common example of games admitting non-subgame-perfect equilibria is the ultimatum game \cite{ultimatumGame}. Our model can also exhibit such equilibria (see also Section~\ref{sect:no-uniqueness-in-general}). Of particular interest is a specific type where the insurer leverages its potential to increase the producer's profit compared to a situation without an insurer. If a specific price pair $(\theta_0, \tildepi_0)$ yields higher profit for the producer than any scenario where the insurer doesn't enter the market, the insurer can issue an ultimatum: "Set the drug price to $\theta_0$, and I will set my premium to $\tildepi_0$; otherwise, I will not enter the market (i.e., set $\pi(\theta) = +\infty$ for $\theta \neq \theta_0$)." If the producer maximizes profit, they will comply and set the price to $\theta_0$.
		
		We can interpret such scenarios as \emph{dictatorial Nash equilibria}, where the insurer effectively dictates the price $\theta_0$ to the producer by threatening market exit. Thus, by doing so, the insurer effectively decides both the drug price and the premium.
	\end{remark}
	
	\begin{definition}[Dictatorial equilibrium]\label{def:dictatorial}
		A \emph{dictatorial equilibrium} is a Nash equilibrium $(\theta_0, \pi)$ where the insurer's strategy takes the form:
		\begin{itemize}
			\item $\pi(\theta_0) = \tildepi_0$ for some $\tildepi_0 \in [\theta_0 r, \theta_0)$,
			\item $\pi(\theta) = +\infty$ for all $\theta \neq \theta_0$.
		\end{itemize}
		In such an equilibrium, the insurer effectively dictates the price $\theta_0$ by refusing to enter the market for any other producer price.
	\end{definition}
	
	\begin{remark}[Maximizing insurer profit via dictatorial equilibria]\label{rem:maximizes-insurance-profit}
		If $(\theta^*, \pi^*)$ is a Nash equilibrium that maximizes the insurer's profit $\Pinsur(\theta^*, \pi^*(\theta^*))$ among all Nash equilibria, then a dictatorial equilibrium $(\theta^*, \pi_{dict})$ exists where $\pi_{dict}(\theta^*) = \pi^*(\theta^*)$ and $\pi_{dict}(\theta) = +\infty$ for $\theta \neq \theta^*$. This is because, according to Theorem~\ref{thm:existence-Nash-equilibrium-with-insurer}, in any Nash equilibrium $(\theta^*, \pi^*)$, the producer's profit $\Ppharma(\theta^*, \pi^*(\theta^*))$ is greater than or equal to the profit they could achieve if the insurer did not enter the market ($\sup_{\theta'} \Ppharma(\theta', +\infty)$). Therefore, the producer is incentivized to accept the dictated price $\theta^*$ under the ultimatum strategy $\pi_{dict}$.
	\end{remark}

	\section{Nash equilibria: existence, properties, and examples}\label{sect:existance-uniqueness}
	In this section, we first discuss existence and uniqueness of Nash equilibria for the model presented in Section \ref{sect:math-model}. Then, we provide examples to compare the outcome, in terms of profits and overall healthcare coverage in the population, for the cases where an insurance is present and where no insurance is present.

	\begin{assumptions}In the remainder, we assume that:
		\begin{itemize}
			\item[1.] the incidence rate $r$ is contained in the open interval $(0,1)$;
			\item[2.] there exists a potential for profit for the producer: there exist $(\theta,\tildepi)\in[0,1]\times\R_+$ with $\Ppharma(\theta,\tildepi)>0$ (which is equivalent to: $\mu([0,1]\times \{0\})<1$) .\end{itemize} 
	\end{assumptions}
	We start by establishing a result of existence. The second half of this section contains examples where no Nash equilibrium exists. 
	
	\subsection{Existence of Nash equilibria} 
	
	Generally, we can neither expect existence nor uniqueness of Nash equilibria for our model. Weak assumptions ensure existence of general and of subgame-perfect equilibria, see Theorem \ref{thm:existence-Nash-equilibrium} below, but uniqueness is still not guaranteed. 
	
	Some strategies of the insurer (e.g. when the profit is negative) are strongly dominated independently of the distribution $\mu$ of the population. Let $\theta$ be fixed; we notice that any premium $\tildepi< \theta r$ leads to negative profits for the insurer while any premium $\tildepi\geq\theta $ leads to $A(\theta,\tildepi)=\emptyset $ (no agents buy insurance) hence null profit for the insurer. 
	
	We thus denote
	\begin{equation}\label{eq:Delta}
		\Delta:= \{(\theta,\tildepi)\;|\;\theta \geq 0, \tildepi \in [\theta r,\theta]\}
	\end{equation} 
	the set of $(\theta, \tildepi)$ that may lead to a positive profit for the insurer, and we use the convention
	$\tildepi=+\infty \Rightarrow \tildepi \in[\theta,+\infty)$, that is, $\{\tildepi=+\infty\}$ denotes any action of the insurer leading to no agents being insured. Hence, the action $\tildepi=+\infty$ is interpreted as \emph{the insurer does not enter the market}.

	We can focus our attention on actions $(\theta,\tilde\pi)\in\Delta$, where the insurance profit is non-negative, as well as the case $\tilde\pi=\infty$, where the insurer refuses to enter the market.
	
	\begin{theorem}[Existence of Nash equilibria]\label{thm:existence-Nash-equilibrium}
		Assume that $\Ppharma$ and $\Pinsur$ are continuous, and that
		\begin{equation}\label{eq:pharma-profit-to-0}
			\lim_{\theta\to\infty} P_p(\theta,+\infty) = 0.
		\end{equation}
		Then there exists at least one subgame-perfect Nash equilibrium as defined in Definition \ref{def:subgame perfectness}. In particular, there exists at least one Nash equilibrium as in Definition \ref{def:Nash equilibrium}. 
		
		Furthermore, there exists a subgame-perfect Nash equilibrium $(\theta,\pi)$ that maximizes the profit of the producer, i.e. for any Nash equilibrium $(\theta',\pi')$, we have
		\begin{equation*}
			\Ppharma(\theta,\pi(\theta))\ge\Ppharma(\theta',\pi'(\theta')).
		\end{equation*}
	\end{theorem}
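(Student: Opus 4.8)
The plan is to solve the game by backward induction: first pin down the insurer's optimal reaction to each drug price, then maximize the producer's induced payoff over $\theta$.

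\emph{Step 1: the insurer's best response for fixed $\theta$.} For each $\theta\ge 0$ I would restrict attention to the band $\Delta$ from \eqref{eq:Delta}: premia below $\theta r$ give negative insurer profit, premia at or above $\theta$ give $\muA=0$ hence zero profit, and $\Pinsur(\theta,\theta r)=0$, so $\sup_{\tildepi}\Pinsur(\theta,\tildepi)=\max_{\tildepi\in[\theta r,\theta]}\Pinsur(\theta,\tildepi)\ge 0$, the maximum being attained by continuity of $\Pinsur$ on the compact interval $[\theta r,\theta]$. Reparametrizing $\tildepi=\theta r+t\,\theta(1-r)$ with $t\in[0,1]$ turns this into a maximization with a fixed compact constraint set and continuous objective, so the maximum theorem of Berge shows that the insurer's value function $v(\theta):=\max_{\tildepi}\Pinsur(\theta,\tildepi)$ is continuous and the best-response set $B(\theta)\subseteq[\theta r,\theta]\cup\{+\infty\}$ varies with closed graph.

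\emph{Step 2: a producer-friendly subgame-perfect selection.} To keep control of the producer's problem I would let the insurer break ties in the producer's favour: define $h(\theta):=\max\{\Ppharma(\theta,\tildepi):\tildepi\in B(\theta)\}$ and let $\pi^*(\theta)$ be a corresponding maximizer (the max is attained since $B(\theta)$ is compact, reading $\Ppharma(\theta,+\infty)$ as $\Ppharma(\theta,\theta)$, which is continuous in $\theta$ because $\muA(\theta,\tildepi)=0$ for $\tildepi\ge\theta$). This $\pi^*$ is automatically subgame-perfect, and in any Nash equilibrium $(\theta',\pi')$ the insurer-optimality condition \eqref{eq:optimality-insurance} forces $\pi'(\theta')\in B(\theta')$, so the producer's profit there is at most $h(\theta')$. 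Hence it suffices to produce $\theta^*\in\argmax_{\theta}h(\theta)$: then $(\theta^*,\pi^*)$ is a subgame-perfect Nash equilibrium whose producer profit $h(\theta^*)$ dominates that of every Nash equilibrium.

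\emph{Step 3: existence of $\theta^*$, the crux.} I would show $h$ is upper semicontinuous and $h(\theta)\to 0$ as $\theta\to\infty$. Upper semicontinuity is a standard argument: along $\theta_n\to\theta$ with $h(\theta_n)$ realised at $\tildepi_n\in B(\theta_n)$, the $\tildepi_n$ are bounded (or equal $+\infty$), so a subsequence converges in $[0,\infty]$; continuity of $\Pinsur$ and of $v$ forces the limit into $B(\theta)$, and continuity of $\Ppharma$ gives $\limsup_n h(\theta_n)\le h(\theta)$. For the decay, on $\Delta$ one has $A(\theta,\tildepi)\subseteq\{\psi>\theta r\}$ and $T(\theta,\tildepi)\subseteq\{\psi>\theta r\}$, so $h(\theta)\le 2\theta r\,\mu(\{\psi>\theta r\})$; since hypothesis \eqref{eq:pharma-profit-to-0} reads $\theta r\,\mu(\{\psi>\theta\})=\Ppharma(\theta,+\infty)\to 0$, equivalently $s\,\mu(\{\psi>s\})\to 0$, the bound tends to $0$. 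An upper semicontinuous, nonnegative function vanishing at infinity attains its maximum on a compact interval, yielding $\theta^*$ (the assumed profit potential of the producer is not even needed for existence — if $\sup_\theta h=0$ any $\theta^*$ works).

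\emph{Expected main obstacle.} The real work is Step 3 combined with the tie-breaking device of Step 2: the insurer's strategy space $\R_+^{\R_+}$ is enormous and permits badly behaved reaction functions, so one cannot simply appeal to compactness of a strategy space. Taming this needs (i) reducing the insurer's subproblem to a compact interval and invoking Berge for continuity of $v$, (ii) choosing the producer-optimal selection so that the envelope $h$ is upper semicontinuous, and (iii) using \eqref{eq:pharma-profit-to-0} precisely to compactify the producer's maximization over $\theta$. Note that measurability of $\pi^*$ is never required, since the producer only maximizes a scalar function of $\theta$.
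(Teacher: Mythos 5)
Your proposal is correct and follows essentially the same route as the paper: reduce the insurer's problem to the compact band $[\theta r,\theta]$, pick a producer-friendly selection from the insurer's best-response set (the paper's set $K$), and use \eqref{eq:pharma-profit-to-0} together with the bound $\Ppharma(\theta,\tildepi)\le r^{-1}\Ppharma(\theta r,+\infty)$ to compactify the producer's maximization—your Berge/upper-semicontinuity argument for the envelope $h$ plays exactly the role of the paper's lemmas on compact super-level sets and closedness of $K$. The only point glossed over (equally so in the paper) is that a Nash premium could lie outside $[\theta' r,\theta']$ while still being insurer-optimal with zero profit; but then $\muA=0$ and the producer's payoff is at most $\Ppharma(\theta',+\infty)\le h(\theta')$, so your dominance conclusion is unaffected.
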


	We postpone the proof of Theorem \ref{thm:existence-Nash-equilibrium} to the end of this section. 
	\begin{remark}
		If $\sup \Ppharma(\cdot, +\infty) = +\infty$, then there is no Nash equilibrium, because the producer can always change $\theta$ to increase its profits. The condition \eqref{eq:pharma-profit-to-0} is sufficient to avoid this case. See Section \ref{sect:no-Nash-equilibria} for other examples that break the assumptions of Theorem \ref{thm:existence-Nash-equilibrium} and where there is no Nash equilibrium. 
	\end{remark}
	
	Note the useful inequalities, valid for every $(\theta,\tildepi) \in \bar \Delta$:
	\begin{align}
		\label{eq:bound-Ppharma}
		& \Ppharma(\theta,+\infty) \leq \Ppharma(\theta,\tildepi) \leq r^{-1} \Ppharma(\theta r,+\infty) ,\\
		\label{eq:Ppharma-without-insurer-equals-tail}
		& \Ppharma(\theta,+\infty) \leq r \int_{[0,1]\times[\theta,\infty)} \psi\,\mathrm \d\mu(p, \psi) .
	\end{align}
	The first one is established by noting that $O(\cdot, \cdot)$ is increasing in both coordinates, therefore, $\tildepi \in [\theta r, \theta]$ imposes that 
	$
	O(\theta r, \theta r) \subset O(\theta r,\tildepi) \subset O(\theta,\tildepi) \subset O(\theta,\theta).
	$
	As for the second inequality, 
	$$(\mathcal{A}+\mathcal T)(\theta,+\infty)=\mathcal T(\theta,+\infty)=\mu([0,1]\times[\theta,\infty))
	$$ hence 
	\begin{equation*}
		\Ppharma(\theta,+\infty) = r\int_{[0,1]\times[\theta,\infty)} \theta\,\d\mu(p,\psi) \leq r\int_{[0,1]\times[\theta,\infty)} \psi\,\mathrm\d\mu(p,\psi).
	\end{equation*}
	
	\begin{remark}[Relation of Theorem \ref{thm:existence-Nash-equilibrium} to the literature on existence of Nash equilibria]
		The classical \emph{Debreu-Fan-Glicksberg (DFG) Theorem} (see \cite{futi91}), stemming from the contributions \cite{deb52, fan52, gli52}, states that a non-cooperative perfect-information $n$-person game with compact action spaces and appropriate continuity and quasi-convexity assumptions on the payoff functions possesses an equilibrium in pure strategies.
		Comparing with our Theorem \ref{thm:existence-Nash-equilibrium}, the assumption \eqref{eq:pharma-profit-to-0} broadly speaking ensures that $\theta$ is chosen inside a compact set (if $\theta$ is too large, $\Ppharma$ is small), while continuity of the payoffs is a direct assumption of Theorem \ref{thm:existence-Nash-equilibrium}. However, we do not expect the quasi-convexity assumptions needed for the DFG Theorem to hold for our model. Therefore, we give a direct of proof of Theorem \ref{thm:existence-Nash-equilibrium} below.
	\end{remark}
	It is useful to also emphasize some characteristics of the population of agents that ensure existence of the Nash equilibrium:
	\begin{lemma}[Stronger condition for existence of Nash equilibrium]\label{cor:existence-Nash-equilibrium}
		The assumptions of Theorem \ref{thm:existence-Nash-equilibrium} are satisfied if 
		\begin{equation}\label{eq:finite-psi-moment}
			\int_{[0,1]\times\R_+} \psi\, \d\mu(p, \psi) < \infty,
		\end{equation}
		and $\mu$ has a density with respect to the Lebesgue measure.
	\end{lemma}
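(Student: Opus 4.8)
The plan is to verify that condition \eqref{eq:finite-psi-moment} together with absolute continuity of $\mu$ implies the three hypotheses of Theorem \ref{thm:existence-Nash-equilibrium}: continuity of $\Ppharma$, continuity of $\Pinsur$, and the decay condition \eqref{eq:pharma-profit-to-0}. I would dispatch \eqref{eq:pharma-profit-to-0} first, since it is essentially immediate: by inequality \eqref{eq:Ppharma-without-insurer-equals-tail} we have $0 \le \Ppharma(\theta,+\infty) \le r\int_{[0,1]\times[\theta,\infty)}\psi\,\d\mu(p,\psi)$, and the right-hand side tends to $0$ as $\theta\to\infty$ by dominated convergence, using \eqref{eq:finite-psi-moment} to provide the integrable dominating function $\psi$ and the fact that the indicator of $[0,1]\times[\theta,\infty)$ decreases pointwise to $0$.

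The bulk of the work is continuity of the two payoff functions, which reduces to continuity of the three mass functions $\muT,\muA,\muO$ (and of $\tildepi\mapsto\tildepi$, which is trivial) at every point of the relevant domain — including the ``boundary'' behaviour as $\tildepi\to+\infty$, where one must check $\muA(\theta,\tildepi)\to 0$ and $\muT(\theta,\tildepi)\to\muT(\theta,+\infty)$ so that $\Ppharma(\theta,\cdot)$ and $\Pinsur(\theta,\cdot)$ connect continuously to their values at $+\infty$. The key observation is that each of the defining regions $O(\theta,\tildepi)$, $A(\theta,\tildepi)$, $T(\theta,\tildepi)$ is cut out by the curves $\{\psi=\theta\}$, $\{p\psi=\tildepi\}$, $\{p\theta=\tildepi\}$; as $(\theta,\tildepi)$ varies, the symmetric difference between the region at $(\theta,\tildepi)$ and the region at a nearby $(\theta',\tildepi')$ is contained in a thin neighbourhood of finitely many such curves (graphs of real-analytic functions of one variable, hence Lebesgue-null). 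Since $\mu$ has a density $f$ with respect to Lebesgue measure, $\mu$ of any Lebesgue-null set is $0$, and $\mu$ is absolutely continuous with respect to Lebesgue measure in the strong sense that $\mu(E_n)\to 0$ whenever the Lebesgue measure of $E_n\to 0$ — so the mass of those shrinking neighbourhoods tends to $0$, giving continuity of $\muT,\muA,\muO$. Then $\Ppharma(\theta,\tildepi)=\theta r(\muA+\muT)$ and $\Pinsur(\theta,\tildepi)=(\tildepi-\theta r)\muA$ are continuous as products and sums of continuous functions.

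I expect the main obstacle to be the bookkeeping for the limiting regime $\tildepi\to+\infty$ and, relatedly, making the ``thin neighbourhood of null curves'' argument rigorous uniformly near the axes $p=0$ and $\psi=0$, where the hyperbola $\{p\psi=\tildepi\}$ degenerates; one clean way is to bound directly, e.g. $\muA(\theta,\tildepi)\le\mu(\{p\ge\tildepi/\theta\})\to0$, and to control $\lvert\muO(\theta,\tildepi)-\muO(\theta',\tildepi')\rvert$ by the $\mu$-mass of an explicit small rectangle-plus-strip region, invoking absolute continuity. A minor secondary point is integrability of the dominating bound in the moment argument near $\theta=0$, but this is harmless since $\Ppharma(\cdot,+\infty)$ is bounded on $[0,1]$ by \eqref{eq:Ppharma-without-insurer-equals-tail} and we only need the limit at $+\infty$. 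Once continuity and \eqref{eq:pharma-profit-to-0} are in hand, the conclusion follows by directly quoting Theorem \ref{thm:existence-Nash-equilibrium}.
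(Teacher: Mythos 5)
Your reduction of the lemma to (i) the decay condition \eqref{eq:pharma-profit-to-0} and (ii) continuity of $\muA,\muT,\muO$ is exactly the paper's, and your treatment of (i) via \eqref{eq:Ppharma-without-insurer-equals-tail} plus \eqref{eq:finite-psi-moment} and a convergence theorem is correct (the paper uses monotone convergence, you use dominated convergence; either works). The gap is in how you justify (ii). Your key mechanism is: the symmetric difference of regions at nearby parameters lies in a thin neighbourhood of the boundary curves, and ``$\mu(E_n)\to 0$ whenever $\mathrm{Leb}(E_n)\to 0$'' then kills its mass. That implication is a true fact for a finite measure with a density, but it does not apply here, because the neighbourhoods in question do \emph{not} have small Lebesgue measure: the state space $[0,1]\times\R_+$ is unbounded in $\psi$, and the symmetric differences for $A$ and $T$ contain vertical strips of the form $\{\,|p-\tildepi/\theta|<\epsilon\,\}\times\R_+$, of infinite Lebesgue measure however small $\epsilon$ is; likewise a width-$\epsilon$ neighbourhood of the hyperbola $\{p\psi=\tildepi\}$ has $\psi$-extent of order $\epsilon/p$, which is non-integrable as $p\to 0$. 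So the step ``Lebesgue measure of the neighbourhood tends to $0$, hence its $\mu$-mass tends to $0$'' fails as stated; this is precisely the degeneracy you flag near $p=0$, but your proposed patches do not close it: the bound $\muA(\theta,\tildepi)\le\mu(\{p\ge\tildepi/\theta\})$ only handles the $\tildepi\to\infty$ regime (which in fact needs no limit analysis, since $\muA=0$ and $T(\theta,\tildepi)=T(\theta,+\infty)$ already for all $\tildepi\ge\theta$), and controlling $\muO$ on bounded rectangles gives continuity of $\muA+\muT=1-\muO$ but not of $\muA$ alone, which $\Pinsur$ requires.

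The fix requires a different mechanism than ``small Lebesgue measure implies small $\mu$-mass.'' Either use continuity from above of the finite measure $\mu$: the shrinking neighbourhoods (even of infinite Lebesgue measure) decrease to the boundary curves, which are Lebesgue-null and hence $\mu$-null because $\mu$ has a density, so their $\mu$-mass tends to $0$ along a decreasing sequence. Or argue as the paper does: for $(\theta_n,\tildepi_n)\to(\theta,\tildepi)$, Fatou's lemma (and its reverse, using $\mu$ finite) gives
\begin{equation*}
\mu(\mathring A(\theta,\tildepi)) \le \liminf_{n\to\infty}\muA(\theta_n,\tildepi_n) \le \limsup_{n\to\infty}\muA(\theta_n,\tildepi_n) \le \mu(\bar A(\theta,\tildepi)),
\end{equation*}
and $\bar A\setminus\mathring A$ is contained in the union of the curves $\{p\theta=\tildepi\}$, $\{p\psi=\tildepi\}$, hence Lebesgue-null, hence $\mu$-null; the two bounds then coincide, and the same argument handles $\muT$ and $\muO$. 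With that replacement your plan goes through and coincides with the paper's proof.
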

	
	\begin{proof}The condition \eqref{eq:finite-psi-moment} implies that $\lim_{\theta\to\infty}\Ppharma(\theta,+\infty) = 0$ by \eqref{eq:Ppharma-without-insurer-equals-tail} and the Monotone Convergence Theorem. 
		
		Continuity of $\Ppharma$ and $\Pinsur$ follows if we can prove the continuity of $\muA$, $\muO$ and $\muT$. To do so, we notice that for every sequence $((\theta_n,\tildepi_n))_{n\in\mathbb N}$ converging to $(\theta,\tildepi)$ as $n\to\infty$, we have
		\begin{equation}\label{eq:sandwiching-liminf-limsup-A}
			\mu(\mathring A(\theta,\tildepi)) \leq \liminf_{n\to\infty} \muA(\theta_n,\tildepi_n) \leq \limsup_{n\to\infty} \muA(\theta_n,\tildepi_n) \leq \mu(\bar A(\theta,\tildepi)),
		\end{equation}
		where $\mathring A$ is the interior of $A$ and $\bar A$ is its closure. But $\bar A \setminus \mathring A$ is a Lebesgue null set by the definition of the set $A$. Thus, by absolute continuity of $\mu$ with respect to the Lebesgue measure, we also have $\mu(\bar A \setminus \mathring A) = 0$. Consequently, $\mu(\mathring A(\theta,\tildepi))=\mu(\bar A(\theta,\tildepi))$, which together with \eqref{eq:sandwiching-liminf-limsup-A} establishes continuity of $\muA$. The same holds for $\muO$ and $\muT$ . \end{proof}
	The following Lemmata are used in the proof of Theorem \ref{thm:existence-Nash-equilibrium}.
	\begin{lemma}[Compactness of super-level sets]
		\label{lemma:compacity-profits}
		Under the assumptions of Theorem \ref{thm:existence-Nash-equilibrium}, for every $t>0$, the following sets are compact:
		\begin{equation}\label{eq:super level pharma}
			\{(\theta,\tildepi)\in\Delta:\Ppharma(\theta,\tildepi) \geq t\}
		\end{equation}
		and 
		\begin{equation}\label{eq:super level insurance}
			\{(\theta,\tildepi)\in\Delta:\Pinsur(\theta,\tildepi) \geq t\}.
		\end{equation}
	\end{lemma}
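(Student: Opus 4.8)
The plan is to show that each of the two sets is closed and bounded in $\R^2$. Closedness is immediate: $\Delta$ is an intersection of closed half-planes (for fixed $r$), hence closed, and $\Ppharma,\Pinsur$ are continuous by hypothesis, so $\{(\theta,\tildepi)\in\Delta:\Ppharma(\theta,\tildepi)\ge t\}$ and $\{(\theta,\tildepi)\in\Delta:\Pinsur(\theta,\tildepi)\ge t\}$ are relatively closed in $\Delta$ and therefore closed in $\R^2$. So the real content is boundedness. Since every point of $\Delta$ satisfies $0\le\tildepi\le\theta$, it suffices to produce, for each $t>0$, a threshold $\Theta_t$ such that $\theta>\Theta_t$ forces the relevant profit below $t$; then the super-level set is contained in $[0,\Theta_t]\times[0,\Theta_t]$.

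For the producer's set \eqref{eq:super level pharma} I would invoke the right-hand inequality in \eqref{eq:bound-Ppharma}, $\Ppharma(\theta,\tildepi)\le r^{-1}\Ppharma(\theta r,+\infty)$, valid for $(\theta,\tildepi)\in\Delta$. Since $r\in(0,1)$ we have $\theta r\to\infty$ as $\theta\to\infty$, and the hypothesis \eqref{eq:pharma-profit-to-0} gives $\Ppharma(\theta r,+\infty)\to 0$; hence $\sup\{\Ppharma(\theta,\tildepi):(\theta,\tildepi)\in\Delta\}\to 0$ as $\theta\to\infty$. Picking $\Theta_t$ so that this supremum is below $t$ for $\theta>\Theta_t$ yields boundedness, and together with closedness, compactness.

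For the insurer's set \eqref{eq:super level insurance} the one extra step is to dominate $\Pinsur$ by $\Ppharma$ on $\Delta$. There $0\le\tildepi-\theta r\le\theta(1-r)$ and $\muA\le\muA+\muT$, so for $\theta>0$
\[
\Pinsur(\theta,\tildepi)=(\tildepi-\theta r)\,\muA(\theta,\tildepi)\le\theta(1-r)\bigl(\muA+\muT\bigr)(\theta,\tildepi)=\frac{1-r}{r}\,\Ppharma(\theta,\tildepi),
\]
the case $\theta=0$ being trivial since then both profits vanish and the point does not belong to a positive super-level set. Chaining this with \eqref{eq:bound-Ppharma} gives $\Pinsur(\theta,\tildepi)\le\frac{1-r}{r^2}\,\Ppharma(\theta r,+\infty)\to 0$ as $\theta\to\infty$, and the argument of the previous paragraph applies verbatim.

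I do not anticipate a genuine obstacle; the points requiring care are (i) that the level must be strictly positive, since for $t=0$ the set is all of $\Delta$, which is unbounded; (ii) that the chain \eqref{eq:bound-Ppharma} is only used for points of $\Delta$, i.e. where $\tildepi\le\theta$; and (iii) the separate treatment of the degenerate corner $\theta=0$ in the bound $\Pinsur\le\frac{1-r}{r}\Ppharma$. Continuity of $\muA,\muT,\muO$, and hence of $\Ppharma$ and $\Pinsur$, is assumed at this stage (it is guaranteed, e.g., under the hypotheses of Lemma \ref{cor:existence-Nash-equilibrium}).
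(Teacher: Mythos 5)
Your proof is correct and follows essentially the same route as the paper: closedness from continuity of the profits, boundedness of the producer's super-level set from the bound $\Ppharma(\theta,\tildepi)\le r^{-1}\Ppharma(\theta r,+\infty)\to 0$ under \eqref{eq:pharma-profit-to-0}, and boundedness of the insurer's set by dominating $\Pinsur$ by a constant multiple of $\Ppharma$ on $\Delta$ (your constant $\tfrac{1-r}{r}$ versus the paper's $r^{-1}$ is an immaterial difference).
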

	
	\begin{proof}
		The sets are closed because $\Ppharma$ and $\Pinsur$ are continuous. To show that \eqref{eq:super level pharma} is bounded, use \eqref{eq:bound-Ppharma} and \eqref{eq:Ppharma-without-insurer-equals-tail}:
		$$ \Ppharma(\theta,\tildepi) \leq r^{-1}\Ppharma(\theta r,+\infty) \to 0 \quad \text{ as } \theta\to\infty . $$
		It thus follows that for every $t>0$, there must be a $M<\infty$ such that $\Ppharma(\theta,\tildepi)\geq t$ implies $\theta\leq M$ (and thus $\tildepi\leq M$). We now show that \eqref{eq:super level insurance} is bounded. Note that the point $(\theta,+\infty)$ does not belong to the set \eqref{eq:super level insurance} for $t>0$. Hence, for $\tildepi\in[r\theta,\theta)$,
		\begin{equation*}
			\Pinsur(\theta,\tildepi) = (\tildepi-\theta r) \muA(\theta,\tildepi) \leq r^{-1} \frac{\tildepi}\theta \Ppharma(\theta,\tildepi) \leq r^{-1} \Ppharma(\theta,\tildepi).
		\end{equation*}
		Thus $\Pinsur(\theta,\tildepi) \geq t$ implies that $(\theta,\tildepi)$ is contained in \eqref{eq:super level pharma}, and the latter set is bounded. 
	\end{proof}
	
	\begin{lemma}[Properties of insurer's profit-maximizing actions]
		\label{lemma:compactness-area-Nash-equilibria}
		Define $K$ as the set of all actions $(\theta,\tildepi)\in \Delta$ with $\Pinsur(\theta,\tildepi) = \sup \Pinsur(\theta,\cdot)$. Under the assumptions of Theorem~\ref{thm:existence-Nash-equilibrium}, 
		\begin{enumerate}
			\item $K$ is closed, 
			\item for every $\theta \in \R_+$, there exists a $\tildepi$ such that $(\theta,\tildepi) \in K$ (possibly $\tildepi=\theta$), 
			\item $\Ppharma$ attains its maximum on $K$, and there exists $M<\infty$ such that all the maximizers $(\theta,\tildepi)\in\Delta$ of $\Ppharma$ satisfy $\theta\leq M$. \label{item:last} 
		\end{enumerate}
	\end{lemma}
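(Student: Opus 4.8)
The plan is to establish the three items in the order (2), (1), (3), since (3) relies on the closedness proved in (1) together with the compactness of the super-level sets of $\Ppharma$ from Lemma~\ref{lemma:compacity-profits}, while (2) is a short extreme-value argument on a compact interval. Throughout I use the standing Assumptions (so $r\in(0,1)$ and $\mu([0,1]\times\{0\})<1$) and the continuity of $\Ppharma,\Pinsur$ assumed in Theorem~\ref{thm:existence-Nash-equilibrium}.

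For item (2), fix $\theta\in\R_+$. The key observation is that $\sup\Pinsur(\theta,\cdot)$, a priori a supremum over all $\tildepi\in\R_+\cup\{+\infty\}$, is in fact a maximum over the compact interval $[\theta r,\theta]$: one has $\Pinsur(\theta,\theta r)=0$; for $\tildepi<\theta r$ the factor $\tildepi-\theta r$ is negative so $\Pinsur(\theta,\tildepi)\le 0$; and for $\tildepi\ge\theta$ (including $\tildepi=+\infty$) one has $A(\theta,\tildepi)=\emptyset$, hence $\Pinsur(\theta,\tildepi)=0$. Thus $\sup\Pinsur(\theta,\cdot)=\max_{\tildepi\in[\theta r,\theta]}\Pinsur(\theta,\tildepi)$, and this maximum is attained because $\Pinsur$ is continuous and $[\theta r,\theta]$ is compact (it degenerates to $\{0\}$ when $\theta=0$, where the claim is trivial). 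Any maximizer $\tildepi$ then satisfies $(\theta,\tildepi)\in K$.

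For item (1), let $(\theta_n,\tildepi_n)\to(\theta,\tildepi)$ with $(\theta_n,\tildepi_n)\in K$; since $\Delta$ is closed and $\tildepi_n\le\theta_n$, the limit lies in $\Delta$. Fix an arbitrary $\tildepi'\in[\theta r,\theta]$ and let $\tildepi'_n:=\min\{\max\{\tildepi',\theta_n r\},\theta_n\}$ be its projection onto $[\theta_n r,\theta_n]$; because $\theta_n\to\theta$ and $\tildepi'\in[\theta r,\theta]$, one checks $\tildepi'_n\to\tildepi'$. Since $(\theta_n,\tildepi_n)\in K$ we have $\Pinsur(\theta_n,\tildepi'_n)\le\Pinsur(\theta_n,\tildepi_n)$; letting $n\to\infty$ and using continuity of $\Pinsur$ gives $\Pinsur(\theta,\tildepi')\le\Pinsur(\theta,\tildepi)$. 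As $\tildepi'$ ranges over $[\theta r,\theta]$ and, by item (2), the supremum of $\Pinsur(\theta,\cdot)$ is attained there, this shows $\Pinsur(\theta,\tildepi)=\sup\Pinsur(\theta,\cdot)$, i.e. $(\theta,\tildepi)\in K$. (Alternatively, items (1) and (2) together are an instance of Berge's maximum theorem applied to the compact-valued continuous correspondence $\theta\mapsto[\theta r,\theta]$ and the continuous objective $\Pinsur$, whose value function is then continuous and whose argmax correspondence has closed graph equal to $K$.)

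For item (3), the plan is first to show $\sup_K\Ppharma>0$ and then to cut down to a compact set. By the second standing assumption, $\mu([0,1]\times(0,\infty))>0$, so by continuity of the measure there is $\theta_0>0$ with $\mu([0,1]\times(\theta_0,\infty))>0$; since $T(\theta_0,+\infty)=[0,1]\times(\theta_0,\infty)$ and $r>0$, this gives $\Ppharma(\theta_0,+\infty)\ge\theta_0 r\,\mu([0,1]\times(\theta_0,\infty))>0$. By item (2) pick $\tildepi_0$ with $(\theta_0,\tildepi_0)\in K\subseteq\Delta$; the left-hand inequality in \eqref{eq:bound-Ppharma} then yields $\Ppharma(\theta_0,\tildepi_0)\ge\Ppharma(\theta_0,+\infty)>0$. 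Set $t_0:=\Ppharma(\theta_0,\tildepi_0)>0$. By Lemma~\ref{lemma:compacity-profits}, $S:=\{(\theta,\tildepi)\in\Delta:\Ppharma(\theta,\tildepi)\ge t_0\}$ is compact, and by item (1) the set $K\cap S$ is compact and nonempty, so the continuous function $\Ppharma$ attains a maximum $t^*\ge t_0>0$ on it; since every point of $K\setminus S$ has $\Ppharma<t_0\le t^*$, $t^*$ is the maximum of $\Ppharma$ over all of $K$. Finally, any $(\theta,\tildepi)\in\Delta$ with $\Ppharma(\theta,\tildepi)\ge t^*$ lies in $\{(\theta,\tildepi)\in\Delta:\Ppharma(\theta,\tildepi)\ge t^*\}$, which is compact by Lemma~\ref{lemma:compacity-profits} and hence bounded; this produces a finite $M$ bounding the $\theta$-coordinate of every maximizer.

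The continuity and compactness bookkeeping is routine; the one genuinely load-bearing point I expect to need care is the strict inequality $\sup_K\Ppharma>0$. Without it the bound $\theta\le M$ would be vacuous — indeed, if the second standing assumption failed, $\Ppharma$ would vanish identically and $K$ would contain arbitrarily large $\theta$ — so this is exactly where that assumption and the lower bound in \eqref{eq:bound-Ppharma} are used. A lesser subtlety is that the insurer's feasible set $[\theta r,\theta]$ varies with $\theta$ and collapses to a point at $\theta=0$, which is why the projection device (or the maximum theorem) is needed in item (1) rather than a naive limit argument.
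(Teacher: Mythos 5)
Your proof is correct and follows essentially the same route as the paper: the extreme-value argument on $[\theta r,\theta]$ for item (2), a sequential limit argument for the closedness of $K$ in item (1), and the compactness of super-level sets from Lemma~\ref{lemma:compacity-profits} together with the positivity of the producer's profit (Assumption 2 via \eqref{eq:bound-Ppharma}) for item (3). The only substantive difference is that your projection device explicitly justifies the convergence of the insurer's value function, a step the paper's proof simply asserts.
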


	\begin{proof}
		To prove that $K$ is closed, consider a sequence $((\theta_n,\tildepi_n))_{n\in\mathbb Z_{\ge 0}}$ in $K$ that converges to some $(\theta,\tildepi)\in\Delta$. Then $\Pinsur(\theta_n,\tildepi_n) \to \Pinsur(\theta,\tildepi)$ as $n\to\infty$. On the other hand, $ \sup \Pinsur(\theta_n,\cdot) \to \sup \Pinsur(\theta,\cdot)$ as $n\to\infty$. We conclude that $(\theta,\tildepi)\in K$. 
		
		We now prove the second claim: Fix $\theta\in\R_+$. By continuity of $\Pinsur$ on the compact set $[\theta r, \theta]$, the maximum of $\Pinsur$ is attained, thus $K \cap (\{\theta\}\times[\theta r, \theta])$ is nonempty. 
		Finally, fix $t = \sup \Ppharma(\cdot, +\infty)$. The set \eqref{eq:super level pharma} is compact by Lemma \ref{lemma:compacity-profits}, thus $\Ppharma$ reaches its maximum on it; as $\Ppharma$ is strictly smaller than this maximum outside of \eqref{eq:super level pharma}, the conclusion follows. 
	\end{proof}
	Having established the previous Lemmata, we can now prove Theorem \ref{thm:existence-Nash-equilibrium}.
	\begin{proof}[Proof of Theorem \ref{thm:existence-Nash-equilibrium}]
		Let $(\theta,\tildepi) \in K$ be a maximizer of $\Ppharma$ on $K$. Then every $(\theta,\pi)$, where $\pi \in \R_+^{\R_+}$ is such that $\pi(\theta) = \tildepi$ and $(\theta',\pi(\theta')) \in K$ for every $\theta'\in\R_+\setminus\set\theta$, is a subgame-perfect Nash equilibrium. 
		
		We now prove the last assertion in Theorem \ref{thm:existence-Nash-equilibrium}, namely that there exists a subgame-perfect Nash equilibrium $(\theta,\pi)$ which maximizes the profit of the producer among all Nash equilibria.
		Since $\set{\tildepi\in[\theta r,\theta]: (\theta',\tildepi)\in K}$ is compact for every $\theta'\in\R_+$ by Lemma \ref{lemma:compactness-area-Nash-equilibria}, and since $\Ppharma$ is continuous by assumption, we can choose $\pi$ such that, for each $\theta'\in\R_+\setminus\set\theta$, we have $(\theta',\pi(\theta'))\in K$ and 
		\begin{equation}\label{eq:locally-dominating}
			\Ppharma(\theta', \pi(\theta')) \ge \Ppharma(\theta', \tildepi)
		\end{equation}
		for every $\tildepi\in[\theta' r,\theta']$ for which $(\theta', \tildepi)\in K$. Since $K$ is compact, by \ref{item:last} of Lemma~\ref{lemma:compactness-area-Nash-equilibria} there exists a $\theta\in\R_+$ such that $(\theta,\pi(\theta))\in K$ and such that
		\begin{equation}\label{eq:dominating-theta}
			\Ppharma(\theta,\pi(\theta))\ge\Ppharma(\theta',\pi(\theta'))
		\end{equation}
		for all $\theta'\in\R_+$. The tuple $(\theta,\pi)$ is a Nash equilibrium by the first half of this proof, and furthermore by construction it maximizes the profit of the producer among all Nash equilibria: Indeed, by \eqref{eq:dominating-theta} and \eqref{eq:locally-dominating},
		$
		\Ppharma(\theta,\pi(\theta))\ge\Ppharma(\theta',\pi(\theta'))\ge \Ppharma(\theta',\pi'(\theta'))
		$
		for any other Nash equilibrium $(\theta',\pi')$.
	\end{proof}

	\subsection{Examples where no Nash equilibrium exists}\label{sect:no-Nash-equilibria}
	
	If either of the conditions of Theorem \ref{thm:existence-Nash-equilibrium} fails, there may not exist any Nash equilibria. Below are two illustrations. They are not chosen for realism from a practical application standpoint, but rather to shed light on the importance of the assumptions in Theorem \ref{thm:existence-Nash-equilibrium}. 
	
	\begin{example}[Lack of regularity]
		Consider the Dirac measure $\mu=\delta_{(x,0)}$ for some $x>0$, meaning that all individual are identical and for simplicity they do not believe they can get sick (so nobody buys insurance). The incidence of the disease  is nevertheless contained in $(0,1)$. In this case, no Nash equilibrium can exist. Indeed, for any $\theta<x$, the producer can increase its profit by increasing $\theta$ while staying below $x$. However, for any $\theta\ge x$, the producer can increase its profit by choosing any $\theta$ below $x$.
	\end{example} 
	Note that the example above relies on the ambiguity on the best decision by the agents in the population, when two options are equivalent, i.e. when the individual lies on the boundary between several regions. For example, if we change the definition of $\texttt{T}$ to include the boundary between $\texttt{T}$ and $\texttt{O}$ (meaning that, when buying the treatment has the same cost as not buying the treatment, the individual chooses to buy the treatment), then the above example does, in fact, have a Nash equilibrium.
	Fortunately, the issue is avoided if $\mu$ has a density with respect to the Lebesgue measure, as seen in Remark \ref{cor:existence-Nash-equilibrium}.

	\begin{example}[Lack of integrability]
		Consider $\mu = \delta_p\otimes\mu_{\psi}$ for some $p\in (0,r)$, where $\delta_p$ denotes the Dirac measure concentrated on $p$ and $\mu_{\psi}$ is a measure on $\R_+$ defined by
		
		\begin{equation*}
			\mu_{\psi}([0,t]) = \max\left(0, 1 - t^{-\alpha}\right),\text{ for all }t>o\text{ and for some $\alpha\in(0,1)$.}
		\end{equation*}
		Then for every $\theta\geq 1$, the profit of the producer is $\theta^{1-\alpha}$, which increases to infinity as $\theta\to\infty$. There is thus no maximum.
	\end{example}
	This example is however unrealistic. Indeed, the average price that an individual is willing to pay, given by 
	$
	\int_{[0,1]\times\R_+} \psi\, \d\mu(p, \psi)$,
	is infinite. Alternatively and equivalently, the earnings of the producer can be made arbitrarily large. This does not reflect real-life situations, where the amount of wealth in the world is finite.

	\subsection{No uniqueness in general}\label{sect:no-uniqueness-in-general}
	Unlike non-existence, non-uniqueness seems to be a generic property rather than a pathological one, as illustrated in the examples below.
	
	\begin{example}
		Consider the population is characterized by the measure $\mu = \eta \otimes\mu_{\psi}$, where $\eta$ is the uniform measure on $[0,r/2]$ and $\mu_{\psi}$ is a measure on $\R_+$ defined by
		\begin{equation*}
			\mu_{\psi}([0,t]) = 
			\begin{cases}
				0 &\text{if } t<1 \\
				1-\frac{1}{t} &\text{if }t\in [1,2) \\
				\frac{3+t}{6} &\text{if } t\in [2, 3] \\
				1 &\text{if } t>3 .
			\end{cases}
		\end{equation*}
		This measure satisfies the hypotheses of Theorem \ref{thm:existence-Nash-equilibrium}. The profit of the producer is, for any $\theta\in [1,2]$, given by $r \theta \mu_{\psi}((\theta,\infty)) = r$, and is strictly smaller for $\theta \notin [1,2]$. The strategy chosen by the insurer does not play a role, since the set $A$ has measure $0$ for any $(\theta,\tildepi)\in \Delta$ under $\mu$ (agents are over optimistic regarding the diagnosis chances and do not buy insurance). Hence \emph{any} $(\theta,\pi)$ for $\pi\in\R_+^{\R_+}$ such that $\theta\in [1,2]$ and $\pi(\theta')\geq \theta' r$ for every $\theta'$ is a Nash equilibrium. 
	\end{example}
	
	\begin{example}
		Take $\mu = \delta_{(p,x)}$ for some $p\in (r,1)$ and $x>0$. From the insurer's point of view,  for every $\theta<px/r$, the price of the insurance should be exactly $\pi = px$: for any price above no one would subscribe to insurance, while any price below yields suboptimal profits to the insurer. If $\theta\geq px/r$ however, the insurer does not make a profit, so they should refuse to enter the market. Now consider the producer's point of view. If the insurer does not enter the market, the producer must set $\theta = x$. On the other hand, if the insurer enters the market (and, by necessity, sets $\pi=px$), the producer should then choose the highest possible $\theta$ that still ensures a non-negative profit to the insurer, which is $px/r > x$. 
		From the above considerations, it is clear that for every $\theta_0 \in (x,px/r)$, the couple $(\pi(\cdot), \theta_0)$ where $\pi(\theta) = px$ for every $\theta \leq \theta_0$ and $+\infty$ otherwise is a Nash equilibrium. Intuitively speaking, this corresponds to the insurer telling the producer that they will not enter the market if the drug price exceeds $\theta_0$, in a situation where the producer will earn more by complying rather not (and having the insurer not enter the market). 
	\end{example}
	While the last example breaks the regularity assumption of Theorem \ref{thm:existence-Nash-equilibrium}, one could make $\mu$ arbitrarily smooth while ensuring $
	\int_{[0,1]\times\R_+}\psi\,\mathrm d\mu(p,\psi)<\infty ,$
	thus ensuring that the assumptions of Theorem \ref{thm:existence-Nash-equilibrium} are satisfied.

	\subsection{When does the insurer enter the market?}
	In some population distributions, no Nash equilibria sees the insurer enter the market. 
	
	\begin{example} 
		Fix $\mu = 0.99\delta_{(0,x)} + 0.01 \delta_{(1, 2x)}$, and $r=1/2$. The only reasonable price for the insurer is $\tildepi = 2x$. For the insurer to enter market, we need its profit to be non-negative, which implies $\theta \leq 2x/r = 4x$. The maximum profit that can be made by the producer is then attained for $\theta = 4x$ and equals $0.04 x$. On the other hand, the producer can make a profit of $0.99x$ by setting $\theta = x$. This is clearly larger, so any $(x,\pi)$ with $\pi\in[x,\infty)^{\R_+}$ is a Nash equilibrium, and in none of these equilibra does the insurer enter the market. 
	\end{example}
	
	This example relies on the existence of regions with zero $\mu$-mass, in this case the region $[0,1]\times[0,x)$. If none exists, then the following theorem guarantees that every Nash equilibrium sees the insurer enter the market. 
	\begin{theorem}[Profits of insurance and producer in Nash equilibria]
		\label{thm:existence-Nash-equilibrium-with-insurer}
		Assume the conditions of Theorem \ref{thm:existence-Nash-equilibrium} hold. Then, in every Nash equilibrium $(\theta,\pi)$, the following hold:
		\begin{enumerate}
			\item the producer makes more profit than if the insurer did not enter the market:
			\begin{equation}\label{eq:profit-pharma-increase}
				\Ppharma(\theta,\pi(\theta)) \geq \sup_{\theta'\in\R_+}\Ppharma(\theta',+\infty);
			\end{equation}
			\item if the Lebesgue measure is absolutely continuous with respect to $\mu$, the insurer makes a non-zero profit:
			\begin{equation*}
				\Pinsur(\theta,\pi(\theta)) > 0.
			\end{equation*}
		\end{enumerate}
	\end{theorem}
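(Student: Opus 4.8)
The two assertions are proved in turn, and the engine behind (1) is the observation that, for a fixed treatment price $\theta'$, the producer's profit $\Ppharma(\theta',\cdot)$ is non‑increasing in the premium, so that the insurer's non‑entry action $\tildepi=+\infty$ is the \emph{least} favourable response for the producer. To see this, note that $O(\theta',\tildepi)=\{(p,\psi):\psi\le\theta',\ p\psi\le\tildepi\}$ increases with $\tildepi$ while $O(\theta',+\infty)=\{\psi\le\theta'\}$; equivalently, any agent with $\psi>\theta'$ lies in $T(\theta',\tildepi)$ when $p\theta'\le\tildepi$ and in $A(\theta',\tildepi)$ otherwise (since then $p\psi>p\theta'>\tildepi$). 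Hence $T(\theta',+\infty)\subseteq A(\theta',\tildepi)\cup T(\theta',\tildepi)$ for every $\theta'\ge 0$ and every $\tildepi\in\R_+\cup\{+\infty\}$, so that
\begin{equation}\label{eq:pp-monotone}
\Ppharma(\theta',\tildepi)=\theta' r\bigl(\muA(\theta',\tildepi)+\muT(\theta',\tildepi)\bigr)\ \ge\ \theta' r\,\muT(\theta',+\infty)=\Ppharma(\theta',+\infty).
\end{equation}

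For (1), apply \eqref{eq:pp-monotone} with $\tildepi=\pi(\theta')$ and combine with the producer's optimality in the Nash equilibrium $(\theta,\pi)$: for every $\theta'\in\R_+$,
\begin{equation*}
\Ppharma(\theta,\pi(\theta))\ \ge\ \Ppharma(\theta',\pi(\theta'))\ \ge\ \Ppharma(\theta',+\infty),
\end{equation*}
and taking the supremum over $\theta'$ yields \eqref{eq:profit-pharma-increase}. This step uses only the defining inequality of a Nash equilibrium (no subgame perfectness and no continuity are needed).

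For (2), I first record that $\theta>0$ in every Nash equilibrium: by Assumption~2 we have $\mu([0,1]\times\{0\})<1$, hence $\mu(\{\psi>\theta_0\})>0$ for all sufficiently small $\theta_0>0$, so $\Ppharma(\theta_0,+\infty)=\theta_0 r\,\mu(\{\psi>\theta_0\})>0$; by (1), $\Ppharma(\theta,\pi(\theta))>0$, which is impossible if $\theta=0$ since $\Ppharma(0,\cdot)\equiv 0$. Now suppose for contradiction that $\Pinsur(\theta,\pi(\theta))=0$. The insurer's optimality --- equivalently $\Pinsur(\theta,\tildepi)\le\Pinsur(\theta,\pi(\theta))$ for all $\tildepi\in\R_+$ --- forces $(\tildepi-\theta r)\muA(\theta,\tildepi)\le 0$ for every $\tildepi\in\R_+$. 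Choosing any $\tildepi^*\in(\theta r,\theta)$ (a nonempty interval because $0<r<1$ and $\theta>0$) and using $\tildepi^*-\theta r>0$ yields $\muA(\theta,\tildepi^*)=0$. But $A(\theta,\tildepi^*)=\{(p,\psi):p\theta>\tildepi^*,\ p\psi>\tildepi^*\}$ contains $\{(p,\psi):\tildepi^*/\theta<p\le 1,\ \tildepi^*/p<\psi<\tildepi^*/p+1\}$, a set of strictly positive (and finite) Lebesgue measure since $\tildepi^*/\theta<1$; as the Lebesgue measure is absolutely continuous with respect to $\mu$, this forces $\muA(\theta,\tildepi^*)=\mu(A(\theta,\tildepi^*))>0$, a contradiction. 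Hence $\Pinsur(\theta,\pi(\theta))>0$.

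Once \eqref{eq:pp-monotone} is isolated the proof is short, so I do not anticipate a genuine obstacle; the care is bookkeeping. The delicate points are: verifying the inclusion $T(\theta',+\infty)\subseteq A(\theta',\tildepi)\cup T(\theta',\tildepi)$ consistently at the degenerate cases ($p=0$, $\theta'=0$, $\tildepi=+\infty$) and under the $\tildepi=+\infty$ convention fixed around \eqref{eq:Delta}; choosing the witness rectangle inside $A(\theta,\tildepi^*)$ with a bounded $\psi$‑slice so its Lebesgue measure is genuinely positive and finite; and respecting the logical order, since (2) invokes (1) only through the preliminary fact $\theta>0$.
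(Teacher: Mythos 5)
Your proof is correct, and its substance coincides with the paper's: part (1) rests on the monotonicity $\Ppharma(\theta',\tildepi)\ge\Ppharma(\theta',+\infty)$ (the left inequality in the paper's display \eqref{eq:bound-Ppharma}) combined with the producer's equilibrium optimality, and part (2) rests on the fact that any premium $\tildepi^*\in(\theta r,\theta)$ yields strictly positive insurer profit once Lebesgue $\ll\mu$ forces $\muA(\theta,\tildepi^*)>0$, combined with the insurer's optimality. The differences are organizational rather than conceptual, and mostly to your credit: the paper funnels both claims through a lemma asserting that for each fixed $\theta$ the insurer's profit attains a strictly positive maximum on $(r\theta,\theta)$, which uses the continuity hypothesis to get attainment; you bypass continuity and attainment altogether, since optimality of $\pi(\theta)$ alone suffices. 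You also explicitly rule out $\theta=0$ via Assumption 2 and part (1), a degenerate case the paper's lemma silently needs (its statement fails at $\theta=0$) but does not address. One cosmetic loose end: to conclude $\Pinsur(\theta,\pi(\theta))>0$ you should exclude $\Pinsur(\theta,\pi(\theta))\le 0$, not just $=0$; this is immediate since deviating to any $\tildepi\ge\theta$ gives the insurer zero profit, so the equilibrium profit is nonnegative — or, more directly, optimality gives $\Pinsur(\theta,\pi(\theta))\ge\Pinsur(\theta,\tildepi^*)>0$ without any contradiction argument.
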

	
	\begin{remark}[Impact of the presence of an insurer on treatment price and access to treatment]
		We shall see in Section \ref{sec:example} that the presence of an insurer can lead to various outcomes in terms of treatment price and access to treatment. There are examples where the price increases, examples where the price decreases, examples where access to treatment improves, and examples where access to treatment worsens.
	\end{remark}
	Theorem~\ref{thm:existence-Nash-equilibrium-with-insurer} immediately follows from the following Lemma, which states that for every fixed treatment price $\theta>0$, the profit for the insurer attains a positive maximum at $\frac{\theta}{r} < \tildepi < \theta$, and the producer's profit is larger than it would be if the insurer refrains from entering the market. 
	
	\begin{lemma}
		Under the conditions of Theorem \ref{thm:existence-Nash-equilibrium-with-insurer}, for every fixed $\theta\in\R_+$, the maximum of $\Pinsur(\theta,\cdot)$ is realized at some $\tildepi \in (r\theta, \theta)$ and strictly positive, and $\Ppharma(\theta,\tildepi) > \Ppharma(\theta,+\infty)$. 
	\end{lemma}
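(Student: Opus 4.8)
The plan is to reduce the insurer's maximization to a compact interval, produce an explicit premium at which the insurer's profit is strictly positive, and then compare the producer's two payoffs via elementary set inclusions combined with the hypothesis that the Lebesgue measure is absolutely continuous with respect to $\mu$. I fix $\theta>0$, which is the case of interest (for $\theta=0$ one has $A(0,\tildepi)=\emptyset$ for every $\tildepi\ge 0$, so $\Pinsur(0,\cdot)\equiv 0$); note also that the standing assumption $r\in(0,1)$ is precisely what makes $(r\theta,\theta)$ a non-empty interval.

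First I would show that $\sup_{\tildepi}\Pinsur(\theta,\tildepi)=\max_{\tildepi\in[r\theta,\theta]}\Pinsur(\theta,\tildepi)$, the maximum being attained. For $\tildepi<r\theta$ the profit $(\tildepi-r\theta)\muA(\theta,\tildepi)$ is non-positive; for $\tildepi\ge\theta$ (and for $\tildepi=+\infty$) the bound $p\le 1$ forces $p\theta\le\theta\le\tildepi$, so $A(\theta,\tildepi)=\emptyset$ and the profit vanishes. Since $\Pinsur$ is continuous (an assumption carried over from Theorem~\ref{thm:existence-Nash-equilibrium}) and $[r\theta,\theta]$ is compact, the maximum over this interval — hence over all premia — is attained. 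Next I would establish strict positivity: for $\tildepi_0:=\tfrac{1}{2}(1+r)\theta\in(r\theta,\theta)$, the open rectangle $(\tildepi_0/\theta,1)\times(\theta,\infty)$ is contained in $A(\theta,\tildepi_0)$ (if $p>\tildepi_0/\theta$ and $\psi>\theta$ then $p\theta>\tildepi_0$ and $p\psi>(\tildepi_0/\theta)\theta=\tildepi_0$), and this rectangle has positive Lebesgue measure, so absolute continuity gives $\muA(\theta,\tildepi_0)\ge\mu\big((\tildepi_0/\theta,1)\times(\theta,\infty)\big)>0$ and hence $\Pinsur(\theta,\tildepi_0)>0$. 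Consequently the maximum from the first step is strictly positive, and any maximizer $\tildepi^*$ must avoid both $r\theta$ (where the profit is $0$) and $\theta$ (where $A(\theta,\theta)=\emptyset$), so $\tildepi^*\in(r\theta,\theta)$.

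For the producer comparison I would argue as follows, with $\tildepi^*\in(r\theta,\theta)$ a maximizer of $\Pinsur(\theta,\cdot)$. Since $A(\theta,\tildepi^*)$ requires $p\theta>\tildepi^*$ and $T(\theta,\tildepi^*)$ requires $p\theta\le\tildepi^*$, these two sets are disjoint, so $\muA(\theta,\tildepi^*)+\muT(\theta,\tildepi^*)=\mu\big(A(\theta,\tildepi^*)\cup T(\theta,\tildepi^*)\big)$. The two key set-level facts are: (i) $T(\theta,+\infty)=\{(p,\psi):\psi>\theta\}\subseteq A(\theta,\tildepi^*)\cup T(\theta,\tildepi^*)$, because a point with $\psi>\theta$ either satisfies $p\theta\le\tildepi^*$, hence lies in $T(\theta,\tildepi^*)$, or satisfies $p\theta>\tildepi^*$, whence $p\psi>p\theta>\tildepi^*$ and it lies in $A(\theta,\tildepi^*)$; and (ii) the non-empty open set $\{(p,\psi):\tildepi^*/\theta<p<1,\ \tildepi^*/p<\psi<\theta\}$ lies in $A(\theta,\tildepi^*)$ but is disjoint from $T(\theta,+\infty)$ (since there $\psi<\theta$), and has positive Lebesgue, hence positive $\mu$, measure. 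Combining (i) and (ii) gives $\mu\big(A(\theta,\tildepi^*)\cup T(\theta,\tildepi^*)\big)>\mu(T(\theta,+\infty))=\muT(\theta,+\infty)$; multiplying by $r\theta>0$ yields $\Ppharma(\theta,\tildepi^*)>\Ppharma(\theta,+\infty)$, which is the remaining claim.

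I do not expect a genuine obstacle here: no convexity or fixed-point input is needed, only continuity and the monotonicity of the regions already recorded. The points that require a little care are purely bookkeeping — verifying that the rectangles and regions used genuinely sit inside the parameter space $[0,1]\times\R_+$, are open with positive Lebesgue measure, and that the absolute-continuity hypothesis is invoked in the correct direction, namely ``positive Lebesgue measure $\Rightarrow$ positive $\mu$-mass''.
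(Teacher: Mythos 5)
Your proof is correct and follows essentially the same route as the paper: restrict the insurer's problem to the compact interval $[r\theta,\theta]$ where continuity gives an attained maximum, exhibit a positive-Lebesgue-measure subset of $A(\theta,\tildepi)$ and invoke $\mathrm{Leb}\ll\mu$ to get strict positivity, and then compare producer profits via a strict inclusion of regions (the paper phrases this as $\mu(O(\theta,\tildepi))<\mu(O(\theta,+\infty))$, you equivalently show $\mu(A\cup T)$ strictly exceeds $\mu(T(\theta,+\infty))$). Your explicit rectangles and the remark excluding $\theta=0$ merely fill in bookkeeping the paper leaves implicit (including what is evidently a typo there, $\muT$ for $\muA$, in the positivity step).
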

	
	\begin{proof}
		Let $\theta\in\R_+$. The function $[\theta r, \theta] \to \R, \tildepi \mapsto \Pinsur(\theta,\tildepi)$ is continuous,  thus attains its maximum, and is zero for $\tildepi \in \{\theta r, \theta\}$. Taking any $\tildepi \in (\theta r, \theta)$ ensures that $\tildepi-\theta r >0$, and (by absolute continuity of the Lebesgue measure with respect to $\mu$) that $\muT(\theta,\tildepi)>0$, and thus that $\Pinsur(\theta,\tildepi)>0$ and $\sup \Pinsur(\cdot, \theta)>0$. 
		
		We now show that, for any $(\theta,\tildepi)\in\Delta$, if $\Pinsur(\theta,\tildepi)=\sup\Pinsur(\cdot,\theta)$, then $\Ppharma(\theta,\tildepi)>\Ppharma(\theta,+\infty)$: Indeed, if $\Pinsur(\theta,\tildepi)=\sup\Pinsur(\cdot,\theta)$, then because $\tildepi < \theta$, we have, denoting the Lebesgue measure on $\R^2$ by $\text{Leb}$, $\mathrm{Leb}(O(x,y)) < \mathrm{Leb}(O(x,+\infty))$, implying by absolute continuity $\muO(x, y) < \muO(x,+\infty)$, thus $\Ppharma(x, y) > \Ppharma(x,+\infty)$.
	\end{proof}

	To conclude the description of Nash equilibria, we give a characterization of equilibrium points:
	
	\begin{lemma}[Characterization of Nash equilibria]\label{lem:characterization of Nash equilibria}
		Recall $K$ as in Lemma \ref{lemma:compactness-area-Nash-equilibria}. 
		Write $K'$ for the set of all $(\theta,\tildepi) \in K$ such that $\Ppharma(\theta,\tildepi) \geq \sup \Ppharma(\cdot, +\infty)$. Then
		\begin{enumerate}
			\item every Nash equilibrium $(\theta,\pi)$ satisfies $(\theta,\pi(\theta))\in K'$,
			\item under the hypotheses of Theorem \ref{thm:existence-Nash-equilibrium}, for every $(\theta,\tildepi) \in K'$, there exists a Nash equilibrium $(\theta,\pi)$, by setting $\pi(\theta) = \tildepi$ and $\pi(z)=z$ for every $z\neq \theta$.
		\end{enumerate}
	\end{lemma}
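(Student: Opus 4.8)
The plan is to prove the two assertions separately, unwinding Definition~\ref{def:Nash equilibrium} and using throughout that $O(\theta,\tildepi)$ is increasing in $\tildepi$ (with $O(\theta,+\infty)=[0,1]\times[0,\theta]$), so that $\tildepi\mapsto\Ppharma(\theta,\tildepi)$ is non-increasing and attains its least value $\Ppharma(\theta,+\infty)$ for every $\tildepi\ge\theta$.

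\emph{Assertion (1).} Let $(\theta,\pi)$ be any Nash equilibrium. Since $\Pinsur(\theta,\pi'(\theta))$ depends on $\pi'$ only through the value $\pi'(\theta)\in\R_+$, the insurer's optimality condition \eqref{eq:optimality-insurance} is equivalent to $\Pinsur(\theta,\pi(\theta))=\sup_{\tildepi\in\R_+}\Pinsur(\theta,\tildepi)$. Because $\Pinsur(\theta,\theta r)=\Pinsur(\theta,\theta)=0$, this supremum is nonnegative, so either $\pi(\theta)\in[\theta r,\theta]$, or $\Pinsur(\theta,\pi(\theta))=0$, in which case $\pi(\theta)$ may be replaced by the outcome-equivalent premium $\theta$ without disturbing the equilibrium. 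Either way $(\theta,\pi(\theta))\in\Delta$ and realizes $\sup\Pinsur(\theta,\cdot)$, i.e.\ $(\theta,\pi(\theta))\in K$. For the remaining inequality $\Ppharma(\theta,\pi(\theta))\ge\sup_{\theta'}\Ppharma(\theta',+\infty)$ one can invoke Theorem~\ref{thm:existence-Nash-equilibrium-with-insurer}(1); alternatively, the producer's no-deviation condition together with the monotonicity noted above gives $\Ppharma(\theta,\pi(\theta))\ge\Ppharma(\theta',\pi(\theta'))\ge\Ppharma(\theta',+\infty)$ for every $\theta'$. Hence $(\theta,\pi(\theta))\in K'$.

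\emph{Assertion (2).} Fix $(\theta,\tildepi)\in K'$ and define $\pi\in\R_+^{\R_+}$ by $\pi(\theta)=\tildepi$ and $\pi(z)=z$ for $z\neq\theta$. The insurer cannot improve at $\theta$ because $\tildepi$ already attains $\sup\Pinsur(\theta,\cdot)$, by $(\theta,\tildepi)\in K$. If the producer deviates to some $\theta'\neq\theta$, its payoff is $\Ppharma(\theta',\pi(\theta'))=\Ppharma(\theta',\theta')$; since $A(\theta',\theta')=\emptyset$ and $T(\theta',\theta')=\{\psi>\theta'\}=T(\theta',+\infty)$, this equals $\Ppharma(\theta',+\infty)$, and by definition of $K'$ we have $\Ppharma(\theta',+\infty)\le\sup_{\theta''}\Ppharma(\theta'',+\infty)\le\Ppharma(\theta,\tildepi)=\Ppharma(\theta,\pi(\theta))$. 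So $\theta$ remains optimal for the producer and $(\theta,\pi)$ is a Nash equilibrium. It is generally not subgame-perfect, since $\pi(z)=z$ is a best reply of the insurer only at prices where it cannot earn a positive profit.

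The main (indeed the only) obstacle I anticipate is the bookkeeping in Assertion~(1) around insurer actions $\tildepi\ge\theta$: all of these, together with $\tildepi=+\infty$, leave nobody insured and are identified with one another, so the conclusion ``$(\theta,\pi(\theta))\in K'$'' must be read modulo this identification. This subtlety disappears whenever $\mathrm{Leb}\ll\mu$, because the lemma immediately preceding this one then gives $\sup\Pinsur(\theta,\cdot)>0$ for every $\theta$, which forces $\pi(\theta)\in(\theta r,\theta)$ outright. Everything else is a direct substitution into the definitions of $A$, $O$, $\Delta$, and $K$.
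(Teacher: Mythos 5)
Your proof is correct; the paper itself omits the argument as ``straightforward,'' and your direct verification — insurer optimality at $\theta$ gives $(\theta,\pi(\theta))\in K$, the chain $\Ppharma(\theta,\pi(\theta))\ge\Ppharma(\theta',\pi(\theta'))\ge\Ppharma(\theta',+\infty)$ gives membership in $K'$, and the construction $\pi(z)=z$ kills insurance off-path so deviations earn only $\Ppharma(\theta',+\infty)$ — is exactly the intended one. The bookkeeping you flag for premiums $\pi(\theta)\ge\theta$ (or $\mu$-null insured sets) is handled by the paper's stated convention identifying all no-insurance actions with $\tildepi=+\infty$, and your replacement by the outcome-equivalent premium $\theta$ resolves it correctly.
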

	The proof of both statements is straightforward and omitted for brevity.

	\begin{remark}[Compactness of $K'$]
		The set $K'$ is compact, since it is the intersection of $K$ with 
		$
		\{(\theta,\tildepi)\in\Delta: \Ppharma(\theta,\tildepi) \geq \sup \Ppharma(\cdot, +\infty)\},
		$
		which is compact by Lemma \ref{lemma:compacity-profits}. Thus $\Pinsur$ attains its maximum on $K'$. This maximum is the maximum profit that the insurer can attain at a Nash equilibrium. 
	\end{remark}
	
	\begin{remark}[First attempt at using the Implicit Function Theorem: through derivatives]\label{rem:implicit-function-Theorem}
		Assume that $\Pinsur$ and $\Ppharma$ are sufficiently differentiable. Consider the set
		\begin{equation*}
			\tilde K = \set*{(\theta,\tildepi)\in\Delta: \frac{\partial\Pinsur}{\partial\tildepi}(\theta,\tildepi)=0, \frac{\partial^2\Pinsur}{\partial\tildepi^2}(\theta,\tildepi)<0}
		\end{equation*}
		of strict local maximizers in the first coordinate of $\Pinsur$. By the implicit function Theorem, for any $(\theta_0, \tildepi_0)\in\tilde K$, there exists a neighborhood $(\theta_0,\tildepi_0)$ such that one can write $(\theta,\tildepi)=(\theta,g(\theta))$ for some function $g$ and all $(\theta,\tildepi)\in\tilde K$ which are contained in this neighborhood. Therefore, $\tilde K$ is, locally, the image of a curve. 
		Unfortunately, recalling $K$ from Lemma \ref{lemma:compactness-area-Nash-equilibria}, we need not have $K\subset\tilde K$, since maximizers may occur also with the second derivative equal to $0$, nor $\tilde K\subset K$, since local maximizers may not be global maximizers. 
	\end{remark}
	\begin{remark}[Second attempt at using the implicit function Theorem: directly]\label{rem:implicit-function-Theorem-2}
		We present an alternative approach: Consider $K$ as in Lemma \ref{lemma:compactness-area-Nash-equilibria} and define $\tilde P: \Delta\to\R$ as $\tilde P(\theta,\tildepi) = \sup \Pinsur(\theta,\cdot)-\Pinsur(\theta,\tildepi)$. Then $K$ is the set of all $(\theta,\tildepi)\in\Delta$ such that $\tilde P(\theta,\tildepi) = 0$. Assume that $\tilde P$ is continuously differentiable (this assumption does not follow from differentiability of $\Pinsur$). Then 
		$
		\partial_{\tildepi}\tilde P = -\partial_{\tildepi}\Pinsur.
		$
		Therefore by the Implicit Function Theorem, at any point $(\theta,\tildepi)\in K$ for which $\partial_{\tildepi}\Pinsur(\theta,\tildepi)\neq 0$, the set $K$ is the image of a curve in a neighborhood of that point. 
	\end{remark}

	\subsection{Some example of notable effects of insurance on prices and coverage}\label{sec:example}
	In this section, we emphasize some non-trivial effects of adding an insurance on the price of treatments, and on the overall access to treatments. We show that in some example it leads to increased overall healthcare coverage, but that in others it leads to a decrease in overall healthcare coverage. 
	
	\subsubsection{Price of insurance exceeds treatment price without insurance.}
	We start off by giving an example where the Nash equilibrium price of the insurance is higher than what would be the equilibrium drug price in the setting where there is no insurance at all. In particular, it follows that the overall healthcare coverage, that is $\muA+\muT$, is lower with insurance than it would be without insurance. 
	
	Consider $\mu=\frac 12 (\delta_{(0, 1)} + \delta_{(1, 1.9)})$. Without an insurance, the producer sets the treatment price at $1$ in order to get the maximum profit of $1$. In the presence of an insurer, a subgame-perfect Nash equilibrium must have the price of treatment at $\frac{1.9}r$ and the premium at $1.9$. In particular, in this case the premium is larger than the price of treatment if no insurer enters the market.
	
	A problem with the presented example is that with the above $\mu$, no Nash equilibrium exists because of the lack of continuity of $\Ppharma$ and $\Pinsur$ (a similar case was already discussed in Section \ref{sect:no-Nash-equilibria}). However, this problem is readily fixed by replacing Dirac deltas at $(x, y)$ with a uniform distribution on a small open neighborhood of $(x,y)$. 
	
	\subsubsection{Price of treatment decreases through introduction of insurance.}
	As seen in Theorem~\ref{thm:existence-Nash-equilibrium-with-insurer}, the profit of the producer increases in equilibrium when there is an insurer rather than not. However, the corresponding drug prices can be higher or lower, depending on the measure $\mu$. In Section \ref{sect:numerical-results}, we give examples of measures where the treatment price increases in the numerically found Nash equilibria. Furthermore, in the previous example, the treatment price goes up by a factor of $\frac 1r$.

	We now give a sketch  of a measure $\mu$ where the treatment price decreases. In particular, the overall coverage increases by introducting an insurer in this case.
	Consider the measure $\mu=\frac 12\delta_{(0, 1/r)} + \frac 12\delta_{(1, 1)}$ with a disease incidence $r<\frac 12$. If the insurer does not enter the market, the local maxima of producer's profits $\Ppharma$ occur at a treatment price of $\theta = \frac 1r$ with a profit of $\frac 12$ (this is the global maximum, hence the producer's choice) and at $\theta=1$ with a profit of $r<\frac 1 2$. 
	However, if we add an insurer in the game, then the producer is incentivized to lower the price infinitesimally below $\frac 1r$ so that the insurer is able to cover also the customers located at $(1,1)$ with positive profit. 
	The same problem of lack of continuity of $\Ppharma$ and $\Pinsur$ as with the previous example occurs, with a fix that is analogous to the one described there.

	\section{Numerical approach}\label{sect:numerical-results}
	The analytic search for Nash equilibria (NE) of the model defined in Section~\ref{sect:math-model} is intractable except in simple cases, hence we propose some numerical illustrations.

	\subsection{Description of numerical methods}\label{sect:description-of-methods}
	
	We consider the case where distribution $\mu$ has a density that is twice continuously differentiable and satisfies the assumptions of Lemma \ref{cor:existence-Nash-equilibrium}. By Theorem~\ref{thm:existence-Nash-equilibrium} there exists a subgame-perfect equilibrium which maximizes the profit of the producer among all NEs. 
	
	For subgame-perfect NEs, we first find  local maxima using:
	\begin{equation*}
		\begin{cases}
			\text{maximize }   & \Ppharma(\theta, \tildepi) \\
			\text{subject to } & \frac{\partial \Pinsur}{\partial \tildepi}(\theta, \tildepi) = 0 , \\
			& \frac{\partial^2 \Pinsur}{\partial\tildepi^2}(\theta, \tildepi) \le 0 , \\
			& r \theta \leq \tildepi \leq \theta .
		\end{cases}
	\end{equation*}
	In a second step, we select the maximum of the found local maxima.

	We also search for those NEs that maximize the insurer's profit, among all NEs;   they can be realized as dictatorial equilibria.\footnote{Note that other dictatorial equilibria, which do not necessarily maximize the insurer's profit, might also exist but are not the focus of this numerical investigation.} 
	The optimization problem writes:
	\begin{equation*}
		\begin{cases}
			\text{maximize }   & \Pinsur(\theta, \tildepi) \\
			\text{subject to } & r \theta \leq \tildepi \leq \theta , \\
			& \Ppharma(\theta, \tildepi) \geq \sup_{\theta'} \Ppharma(\theta', +\infty) .
		\end{cases}
	\end{equation*}

	\subsection{Numerical results}\label{sect:numerical-results-results}
	
	In order to test our methods, we consider different families of distributions $\mu$ for the characteristics of the population; for simplicity we take them of the form $\mu=\mu_p\otimes\mu_\psi$, that is, the subjective probability of falling sick $p$ and the drug reservation price $\psi$ of a randomly selected agent are independent, and they are $\mu_p$- and $\mu_\psi$-distributed, respectively.
	
	\begin{center}\small 
		\begin{tabular}{l|l|l|l}
			Section & $\mu_p$ & $\mu_\psi$ & equilibrium type \\
			\hline 
			\ref{sect:varying-exponential} & $\mathrm{Beta}(2,3)$ & $\mathrm{Exp}($various$)$ & subgame-perfect \\
			\hline
			\ref{sect:varying-Beta-exponential} & $\mathrm{Beta}($various$,2)$ & $\mathrm{Exp}(1)$ & subgame-perfect \\
			\hline
			\ref{sect:dictatorial-eq} & $\mathrm{Beta}($various$,2)$ & $\mathrm{Exp}(1)$ & dictatorial \\
			\hline
			\ref{sect:varying-Pareto} & $\mathrm{Beta}(2,2)$ & Pareto$($various$)$ & subgame-perfect
		\end{tabular}
	\end{center}

	\subsubsection{Varying exponential families}\label{sect:varying-exponential}
	
	The subgame-perfect NEs $(\theta^*,\tildepi^*)$ for $\mu_p = \mathrm{Beta}(2,3)$ a Beta distribution with shape parameters $2$ and $3$ and $\mu_\psi=\mathrm{Exp}(\lambda)$ an exponential distribution with varying parameter $\lambda$ are plotted in Figure \ref{fig:Beta-varying-exponential-equilibria}, together with the proportion of patients choosing the various options and the profit by the insurer and producer as a function of $\lambda$. We observe that the profits and the premium and treatment price are inversely proportional to $\lambda$, while the fraction of people choosing each option remains constant; indeed, all these problems reduce to the same one (with $\mu_\psi = \mathrm{Exp}(1)$) by multiplying all costs by $\lambda$. Note that in this example,  $1/\lambda$ represents the average reservation price in the population.
	
	\begin{table}\begin{center}
	    \captionof{table}{Beta-varying-exponential-equilibria} \label{tab:Beta-varying-exponential-equilibria}
		\footnotesize
		\begin{tabular}{c|cc|ccc|cc}
			\toprule
			$\lambda$ & $\theta^*$ & $\tildepi^*$ & $\mathcal A^*$ & $\mathcal T^*$ & $\mathcal O^*$ & $\Pinsur(\theta^*,\tildepi^*)$ & $\Ppharma(\theta^*,\tildepi^*)$ \\
			\midrule
			1.000 & 1.113 & 0.540 & 0.232 & 0.157 & 0.611 & 0.048 & 0.130 \\
			1.571 & 0.708 & 0.344 & 0.232 & 0.157 & 0.611 & 0.030 & 0.083 \\
			2.143 & 0.519 & 0.252 & 0.232 & 0.157 & 0.611 & 0.022 & 0.061 \\
			2.714 & 0.410 & 0.199 & 0.232 & 0.157 & 0.611 & 0.018 & 0.048 \\
			3.286 & 0.339 & 0.164 & 0.232 & 0.157 & 0.611 & 0.015 & 0.040 \\
			3.857 & 0.288 & 0.140 & 0.232 & 0.157 & 0.611 & 0.012 & 0.034 \\
			4.429 & 0.251 & 0.122 & 0.232 & 0.157 & 0.611 & 0.011 & 0.029 \\
			5.000 & 0.223 & 0.108 & 0.232 & 0.157 & 0.611 & 0.010 & 0.026 \\
			\bottomrule
		\end{tabular}
	\end{center}\end{table}
	\begin{figure}[H]  
		\centering
		\includegraphics[width=0.7\linewidth]{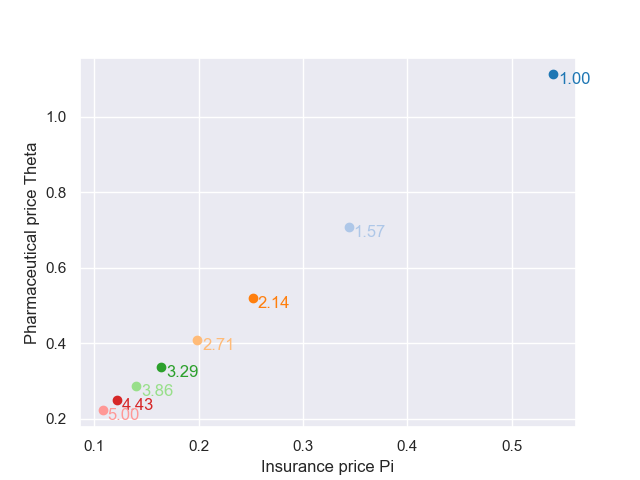}
		\newline
		\subfloat{
			\includegraphics[width=0.5\linewidth]{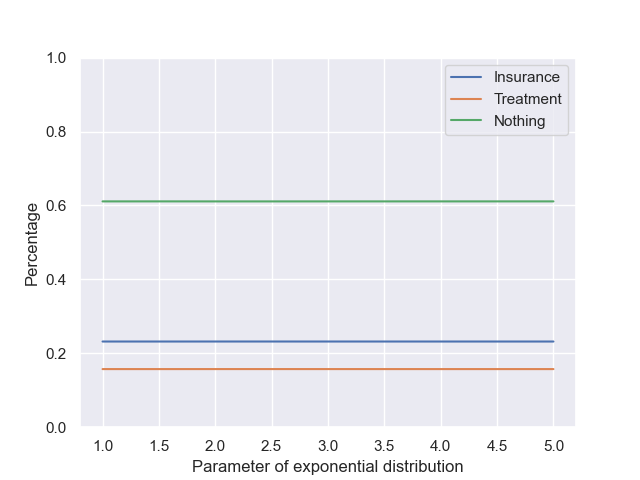}
		}
		\subfloat{
			\includegraphics[width=0.5\linewidth]{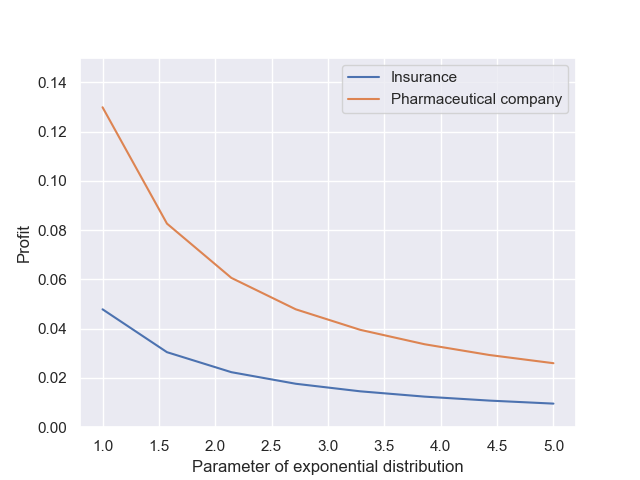}
		}
		\caption{Top:  NEs for $\mu=\mathrm{Beta}(2,3) \otimes\mathrm{Exp}(\lambda)$, with $\lambda$ 
			indicated as annotation next to each point in the plot. We use a disease incidence of $r=0.3$. 
			Bottom left: percentage of the population choosing insurance, treatment, or choosing neither, at the Nash equilibrium, as a function of $\lambda$. Bottom right plot: profit of the insurance and the producer at equilibrium as a function of $\lambda$ (constants for this choice of distribution). 
		}
		\label{fig:Beta-varying-exponential-equilibria}
	\end{figure}
	
	\subsubsection{Varying Beta families}\label{sect:varying-Beta-exponential}
	
	Figure \ref{fig:Varying-Beta-exponential-equilibria} shows subgame-perfect NEs for $\mu = \mathrm{Beta}(s_1, 2) \otimes \mathrm{Exp}(1)$ for various values of $s_1$. 
	We observe that as the first shape parameter of the Beta distribution tends to $0$, the equilibrium treatment price tends to $1$, which is the price that optimizes the producer profit if there is no insurance. This behavior is expected: when the first shape parameter tends to $0$, the Beta distribution concentrates around $0$. Since $\tildepi \geq r\theta$ no patient $(p,\psi)$ with $p<r$ buys insurance. Thus, when $s_1 \to 0$, a vanishing number of people buy insurance. 
	Furthermore, increasing $s_1$ (which makes higher $p$ more likely in the population) leads to an increase of the premium and of the treatment price. Heuristically, a large $s_1$ makes the Beta distribution concentrated around $1$; more people are willing to buy (and pay more for) insurance given that they deem it likely to fall sick, and the higher demand also allows the producer to increase its prices since most of the coverage goes through the insurance anyway.
	
	\begin{table}\begin{center}\captionof{table}{Varying-Beta-exponential-equilibria} \label{tab:Varying-Beta-exponential-equilibria}
		\footnotesize
		\begin{tabular}{c|cc|ccc|cc}
			\toprule
			$s_1$ & $\theta^*$ & $\tildepi^*$ & $\mathcal A^*$ & $\mathcal T^*$ & $\mathcal O^*$ & $\Pinsur(\theta^*,\tildepi^*)$ & $\Ppharma(\theta^*,\tildepi^*)$ \\
			\midrule
			0.100 & 1.008 & 0.485 & 0.010 & 0.357 & 0.633 & 0.002 & 0.111 \\
			0.200 & 1.015 & 0.491 & 0.021 & 0.346 & 0.633 & 0.004 & 0.112 \\
			0.500 & 1.041 & 0.514 & 0.054 & 0.311 & 0.636 & 0.011 & 0.114 \\
			1.000 & 1.088 & 0.553 & 0.105 & 0.256 & 0.640 & 0.024 & 0.118 \\
			1.500 & 1.133 & 0.593 & 0.147 & 0.209 & 0.644 & 0.037 & 0.121 \\
			2.000 & 1.173 & 0.630 & 0.180 & 0.172 & 0.648 & 0.050 & 0.124 \\
			3.000 & 1.239 & 0.698 & 0.226 & 0.119 & 0.654 & 0.074 & 0.129 \\
			4.000 & 1.286 & 0.754 & 0.254 & 0.087 & 0.659 & 0.093 & 0.131 \\
			5.000 & 1.316 & 0.799 & 0.271 & 0.066 & 0.663 & 0.110 & 0.133 \\
			10.000 & 1.344 & 0.925 & 0.305 & 0.025 & 0.669 & 0.159 & 0.133 \\
			\bottomrule
		\end{tabular}
        \end{center}
	\end{table}
	
	\subsubsection{Dictatorial Nash equilibria}\label{sect:dictatorial-eq}
	Figure \ref{fig:Beta-times-exponential-dictatorial} illustrates dictatorial NEs (see Definition \ref{def:dictatorial}). For $\mu=\mathrm{Beta}(2,2)\otimes\mathrm{Exp}(\lambda)$ and $\mu=\mathrm{Beta}(2,2)\otimes\text{Pareto}(1, s_2)$ we find the results are very similar as those found in sections \ref{sect:varying-Beta-exponential} and \ref{sect:varying-Pareto}, respectively. However, the behavior of dictatorial NEs is markedly different for $\mu=\mathrm{Beta}(s_1, 2)\times\mathrm{Exp}(1)$. We thus highlight only this case here.
	
	Coverage and treatment curves are similar in Figures~\ref{fig:Varying-Beta-exponential-equilibria} and~\ref{fig:Beta-times-exponential-dictatorial}. However, the price of treatment starts decreasing when $s_1$ is large in Figure~\ref{fig:Beta-times-exponential-dictatorial}, while it is always increasing in Figure~\ref{fig:Varying-Beta-exponential-equilibria}.
	\begin{table}\begin{center}\captionof{table}{Beta-times-exponential-dictatorial} \label{tab:Beta-times-exponential-dictatorial}
		\footnotesize
		\begin{tabular}{c|cc|ccc|cc}
			\toprule
			$s_1$ & $\theta^*$ & $\tildepi^*$ & $\mathcal A^*$ & $\mathcal T^*$ & $\mathcal O^*$ & $\Pinsur(\theta^*,\tildepi^*)$ & $\Ppharma(\theta^*,\tildepi^*)$ \\
			\midrule
			0.400 & 1.321 & 0.639 & 0.036 & 0.241 & 0.724 & 0.009 & 0.110 \\
			0.500 & 1.335 & 0.650 & 0.044 & 0.231 & 0.725 & 0.011 & 0.110 \\
			1.000 & 1.332 & 0.670 & 0.089 & 0.199 & 0.712 & 0.024 & 0.115 \\
			1.500 & 1.330 & 0.689 & 0.130 & 0.170 & 0.701 & 0.038 & 0.119 \\
			2.000 & 1.328 & 0.707 & 0.164 & 0.145 & 0.691 & 0.051 & 0.123 \\
			3.000 & 1.317 & 0.738 & 0.215 & 0.109 & 0.675 & 0.074 & 0.128 \\
			4.000 & 1.306 & 0.765 & 0.251 & 0.085 & 0.665 & 0.093 & 0.131 \\
			5.000 & 1.296 & 0.788 & 0.275 & 0.067 & 0.658 & 0.110 & 0.133 \\
			10.000 & 1.238 & 0.858 & 0.328 & 0.030 & 0.642 & 0.160 & 0.133 \\
			\bottomrule
		\end{tabular}
        \end{center}
	\end{table}

	\subsubsection{Varying Pareto families}\label{sect:varying-Pareto}
	Finally, we consider $\mu=\mathrm{Beta}(2,2) \otimes \mathrm{Pareto}(1,s_2)$ for various values of $s_2$ in Figure \ref{fig:Beta-varying-Pareto}. The higher $s_2$, the more the Pareto distribution is concentrated around $1$.

	\begin{table}\begin{center}\captionof{table}{Varying Pareto families} \label{tab:Varying Pareto families}
		\footnotesize
		\begin{tabular}{c|cc|ccc|cc}
			\toprule
			$s_2$ & $\theta^*$ & $\tildepi^*$ & $\mathcal A^*$ & $\mathcal T^*$ & $\mathcal O^*$ & $\Pinsur(\theta^*,\tildepi^*)$ & $\Ppharma(\theta^*,\tildepi^*)$ \\
			\midrule
			1.100 & 1.822 & 0.964 & 0.335 & 0.293 & 0.373 & 0.140 & 0.343 \\
			1.200 & 1.683 & 0.896 & 0.350 & 0.306 & 0.343 & 0.137 & 0.332 \\
			1.400 & 1.392 & 0.764 & 0.378 & 0.378 & 0.244 & 0.131 & 0.316 \\
			1.600 & 1.158 & 0.654 & 0.393 & 0.493 & 0.114 & 0.121 & 0.308 \\
			1.800 & 1.068 & 0.608 & 0.396 & 0.556 & 0.048 & 0.114 & 0.305 \\
			2.000 & 1.041 & 0.592 & 0.396 & 0.577 & 0.026 & 0.111 & 0.304 \\
			3.000 & 1.015 & 0.578 & 0.396 & 0.596 & 0.007 & 0.108 & 0.302 \\
			4.000 & 1.010 & 0.575 & 0.396 & 0.599 & 0.004 & 0.108 & 0.302 \\
			5.000 & 1.008 & 0.574 & 0.396 & 0.601 & 0.003 & 0.108 & 0.302 \\
			6.000 & 1.007 & 0.573 & 0.396 & 0.601 & 0.002 & 0.108 & 0.301 \\
			\bottomrule
		\end{tabular}
        \end{center}
	\end{table}

	\begin{figure}[H]  
		\centering
		\includegraphics[width=0.7\linewidth]{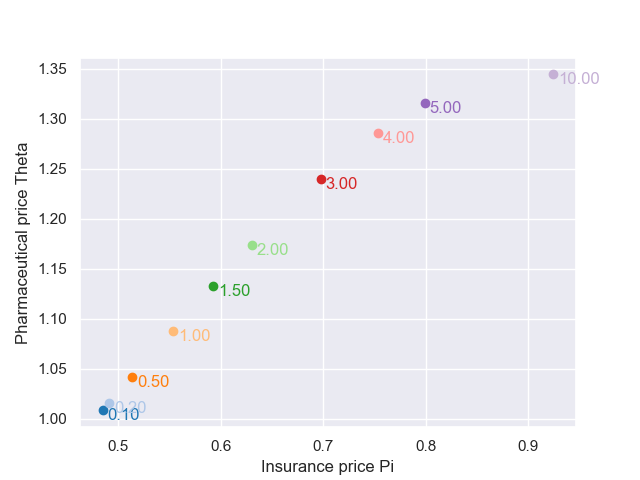}
		\newline
		\subfloat{
			\includegraphics[width=0.5\linewidth]{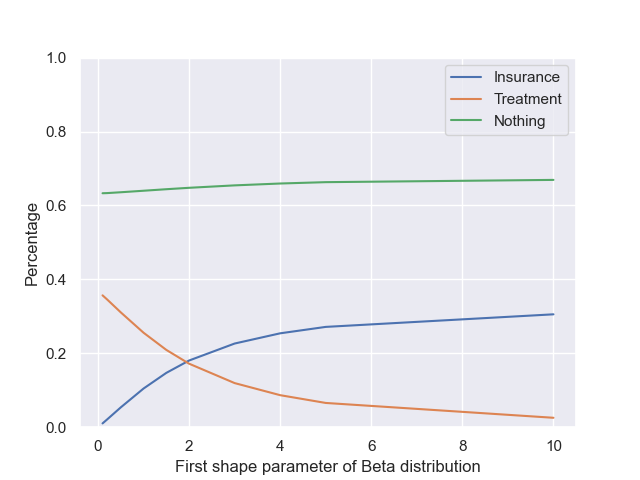}
		}
		\subfloat{
			\includegraphics[width=0.5\linewidth]{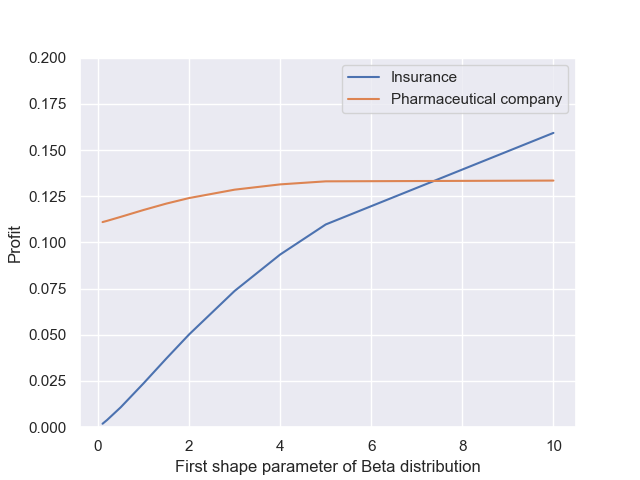}
		}
		\caption{This figure is analogous to Figure \ref{fig:Beta-varying-exponential-equilibria} but with $\mu=\mathrm{Beta}(s_1, 2)\otimes\mathrm{Exp}(1)$, where the parameter $s_1$ is indicated as an annotation next to each point. The disease incidence is $r=0.3$.} 
	\label{fig:Varying-Beta-exponential-equilibria}
\end{figure}

\begin{figure}[H]  
	\centering
	\includegraphics[width=0.7\linewidth]{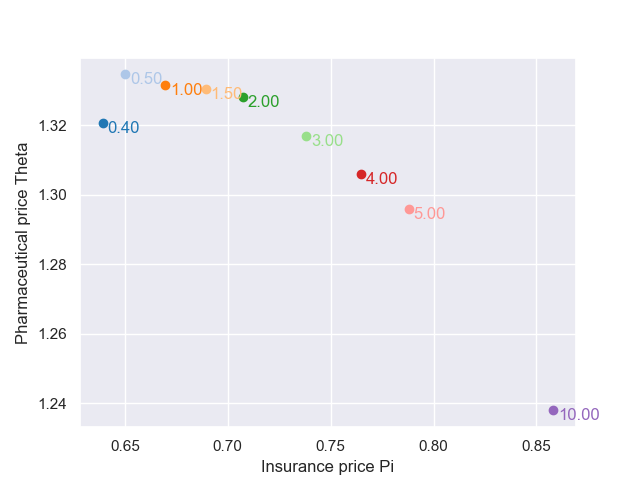}
	\subfloat{
		\includegraphics[width=0.5\linewidth]{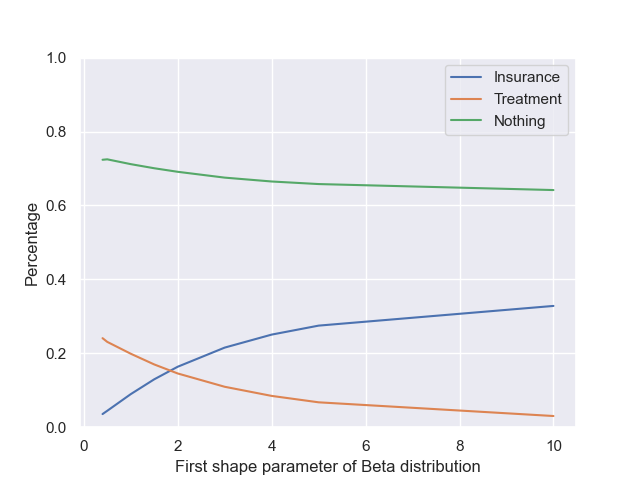}
	}
	\subfloat{
		\includegraphics[width=0.5\linewidth]{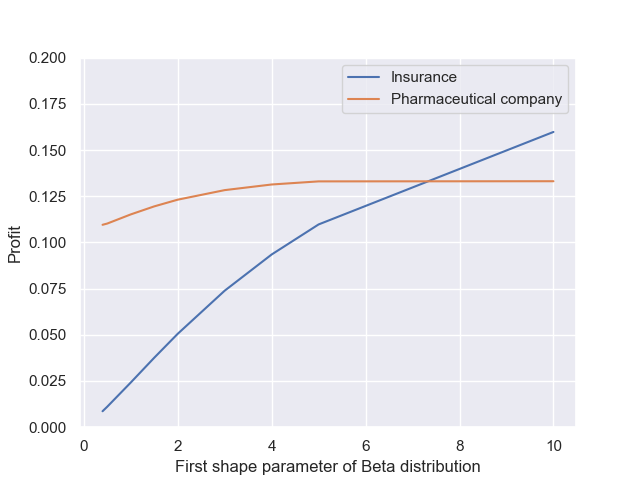}
	}
	\caption{This figure is analogous to Figure \ref{fig:Varying-Beta-exponential-equilibria}, with the difference that it shows dictatorial Nash equilibria (instead of subgame-perfect equilibria). 
		We consider $\mu=\mathrm{Beta}(s_1,2)\times\mathrm{Exp}(1)$, where 
		the shape parameter $s_1$ is indicated in the plot next to each point. The incidence rate is $r=0.3$.} 
\label{fig:Beta-times-exponential-dictatorial}
\end{figure}

\begin{figure}  
\centering
\includegraphics[width=0.7\linewidth]{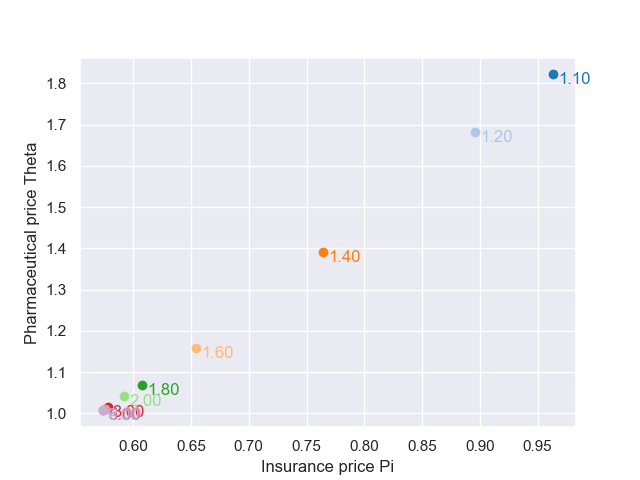}
\newline
\subfloat{
	\includegraphics[width=0.5\linewidth]{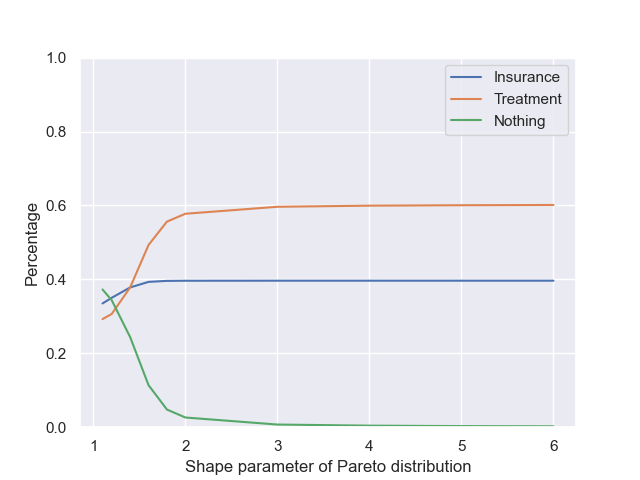}
}
\subfloat{
	\includegraphics[width=0.5\linewidth]{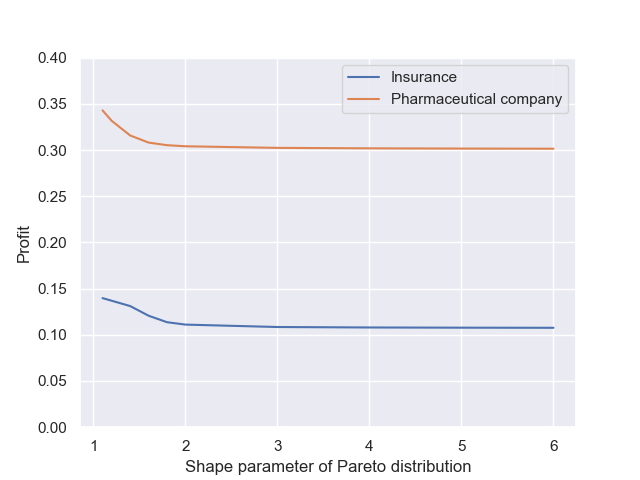}
}
\caption{This figure is analogous to Figure \ref{fig:Beta-varying-exponential-equilibria} but with $\mu=\mathrm{Beta}(2, 2)\otimes\text{Pareto}(1, s_2)$, where the scale parameter of the Pareto distribution (that is, the infimum of the support of the distribution) is $1$, and $s_2$ denotes the shape parameter of the Pareto distribution. The density of the Pareto distribution is thus $t\mapsto \mathds 1_{t\ge 1} \frac{s_2}{t^{s_2+1}}$. The parameter $s_2$ is indicated as an annotation next to each point. The disease incidence is $r=0.3$.}
\label{fig:Beta-varying-Pareto}
\end{figure}

\section{Conclusions and Perspectives}
This paper integrates monopoly power, insurer strategy, and patient heterogeneity into a single sequential‐game framework for life‑saving drug markets. By allowing both the manufacturer and the insurer to choose prices endogenously, we uncover that insurance guarantees greater producer surplus, yet can raise or lower treatment prices and access, depending on the joint distribution of illness risk and liquidity constraints. Our analytical results, reinforced by calibrated simulations, highlight policy levers—premium regulation, price caps, and value‑based subsidies -- that can realign incentives without bluntly suppressing innovation. More broadly, the model provides a tractable benchmark for evaluating ongoing proposals such as Medicare price negotiation, outcome‑based contracts, and international reference pricing. Future empirical work that recovers the sufficient statistics we identify will be critical for translating these theoretical insights into actionable, welfare‑improving reforms.

\bibliographystyle{abbrv}
\bibliography{bib.bib}

\appendix
\section{Optimal health decisions and the value of life}\label{sec:multiperiod}

In this appendix, we propose a multi-period health decision problem for each agent in the population, extending the single-period model in Subsection \ref{sect:Customers} (in particular Example \ref{example1}) and incorporating  long-term implications of medical treatment choices. Lifetime utility allows for uncertainty in future income, consumption, discounting, and risk aversion, yet the model remains tractable: in the end, the population is summarised, as before, by two sufficient statistics per period.  The producer–insurer game therefore extends naturally to multi‑period horizons without loss of analytical clarity.



At each period $t\in\mathbb N$, a population of $S_t$ \emph{susceptible} agents may (i) buy full insurance for premium $\tilde\pi_t$, (ii) face diagnosis, and (iii) if not covered, decide on treatment at price $\theta_t$.  Once diagnosed and treated (or not), the agent exits the pool. 



Let us describe the decision problem of a single agent, called agent $a$, susceptible at time 0.  We introduce a measurable space $(\Omega_F, \widetilde{\mathcal F})$ to model financial randomness and a second one $(\Omega_H, \mathcal H)$, for the health outcomes of the agent. Economic and health-related random variables co-exist in the product space $(\Omega,\mathcal F):=(\Omega_F\times \Omega_H,\widetilde{\mathcal F}\otimes\mathcal H )$, endowed with a probability $\mathbb P$. The model is based on the following agent-specific elements:
\begin{itemize}
\item[-] the agent's time of death, a $\mathcal H$-measurable random variable, $T\in \mathbb R_+^*$;

\item[-] the agent's diagnosis time, a $\mathcal H$-measurable random variable denoted by $\tau$, $\mathbb R^*_+\cup \{+\infty\}$-valued and having a continuous distribution;


\item[-] agent-level information. We assume the agent observes both economic and health-related factors. Thus, at time $t\in\mathbb N$, the agent observes the prices of drugs and insurance prevailing for the period $[t,t+1)$, as well as the predictable consumption of the period, denoted by $c_t$. All these are assumed $\widetilde{\mathcal F}$-measurable and for now also assumed exogenous to the analysis. Denote:
$$
\widetilde{\mathcal F_t}:=\sigma(\theta_s,\pi_s, c_s, s\in \{0,..., t\}).
$$
The diagnosis time $\tau$ is also revealed, once it occurs. For simplicity, we assume this the only relevant health-related information. Hence, denote the agent's information at time $t\in \mathbb N$ by 
\begin{equation}\label{eq:filtration}
	\mathcal F_t:=\widetilde{\mathcal F_t} \otimes \sigma(\tau\wedge s, s\in\{0,..., t\}).
\end{equation}

Thus, $\mathcal F_t$ represents the information of the agent at the beginning of the period $[t,t+1)$, that is relevant for decisions to be taken at time $t$. The intra-period information of the agent is given by
$$
\mathcal F_{t_+}:= \widetilde{\mathcal F_t} \otimes \sigma(\tau \wedge (t+1)),
$$
hence $\mathcal F_t\subset \mathcal F_{t_+}\subset \mathcal F_{t+1}$. The interpretation is as follows. The diagnosis time $\tau$ can a priori occur within any period. This may trigger unexpected consumption, for example, if the agent decides to buy the treatment out of pocket at the diagnosis time $\tau$. We need intra-period information in order to model such an intra-period decision. 
\end{itemize}

\subsection{Preferences}\label{sec:pref} 
Each agent in the population is characterized by preferences. We assume that an agent's future health prospects affect preferences for consumption choices in upcoming periods. This assumption is inspired by \cite{Pliskin} and \cite{murphy2006}.
\begin{definition}\label{def:cons}
A \textit{consumption stream} of agent $a$ is a process $(\tilde c_t)_{t\in\mathbb N}$, adapted with respect to $(\mathcal F_{t+})_{t\in\mathbb N}$,   where $\tilde c_t$ represents the consumption of the period $[t,t+1)$ 
$$
\tilde c_t=c_t +\Delta c_t ,
$$
where $(c_t)_{t\in\mathbb N},$ is the predictable consumption ---exogenous to the analysis--- and $\Delta c_t\leq 0$ 
is interpreted as a health-related spending of the period $[t,t+1)$. 
\end{definition}
Note that the health expenses $\Delta c_t$ are intended to result from a decision process, as detailed in Subsection \ref{sec:decisions}, and we will consider consumption streams with $\Delta c\in\{0,\pi_t,\theta_t\}$, that is, no expense, purchase of insurance, or purchase of treatment.

To assess the agent's consumption during a future period, relatively to the consumption of a previous period,
we introduce a family of utility functions $u^H_{s,t}: L^\infty(\Omega, \mathcal F_{t_+})\to L^\infty(\Omega, \mathcal F_{s})$, $s\leq t\in \mathbb N$, assumed to encode the agent's preferences, and represented as:
\begin{equation}\label{eq:utility1}
u^H_{s,t}(X)= \frac{\mathbf 1_{T>s}}{(1+\alpha)^{t-s} }\mathbb E_{\mathbb Q_s}\big  [X \prod_{\ell=s}^{t}q_{\ell}\gamma_{\ell} \big |\mathcal F_s\big ]
\end{equation} where:
\begin{itemize} 
\item[-] the constant $\alpha>0$ is the discount factor in the economy; 
\item[-] the process $H=(H_t)_{t\in\mathbb N}$ appearing as an index for the utility functions indicates the \textit{health status} of the agent $a$ at any time $t$, assumed to have the representation: 
$$
H_t:= \prod_{\ell =1}^t q_{\ell}\gamma_\ell, \quad t\in \mathbb N,
$$ where
\begin{itemize}
	\item[-] $q_t\in(0,1)$ is an $(\mathcal F_t)$-adapted process, representing the loss/improvement in the quality of life during the time interval $[t,t+1)$. Hence $\prod_{\ell=1}^kq_{t+\ell}$ represents the relative loss/improvement in the quality of life $k$ periods after time $t$; 
	\item[-] $\gamma_t\in(0,1)$ is is an $(\mathcal F_t)$-adapted process, i.e., the probability of survival during the period $[t,t+1)$, given survival at $t$. Hence, $\prod_{\ell=s}^t\gamma_{\ell}$ represents the probability of survival during the period $[t,t+1)$, given survival at time $s$; 
	
\end{itemize}
\item[-] $(\mathbb Q_t)_{t\in\mathbb N}$ is a family of  probability measures absolutely continuous with respect to $\mathbb P$, that are specific to  agent $a$. Intuitively,  each  $\mathbb Q_t$ represents  time-dependent ``scenarios weights", that may reflect the level of risk aversion of the agent at time $t$. 

\end{itemize}
\begin{remark}
The agent's utility at time $t$  is based upon  expectation under some subjective probability, that can be updated as time evolves, not only to account for additional information (through the conditionning) but also for changes in beliefs. As an example,  assume that agent  becomes more pessimistic with aging,  regarding the evolution of health and consumption. This feature can be captured by probabilities $\mathbb Q_t$ that induce first order stochastic dominance for health and consumption variables, as follows: $\mathbb Q_s(c_t<x|\mathcal F_u)\leq \mathbb Q_u(c_t<x|\mathcal F_u)$ and $\mathbb Q_s(q_t\gamma_t<x|\mathcal F_u)\leq \mathbb Q_u(q_t\gamma_t<x|\mathcal F_u)$, for $s\leq u\leq t$. Similarly,  we could imagine individuals that become more pessimistic about health prospects but more optimistic about consumption, etc.
\end{remark}
\begin{remark}[Inter-period utilities]The random variable $X$ in \eqref{eq:utility1} is $\mathcal F_{t_+}$-measurable, with the interpretation is that $\tilde c_{t}:= X$ corresponds to a consumption during the interval $[t,t+1)$. The conditioning wrt $\mathcal F_s$ indicates that \eqref{eq:utility1} represents an assessment at time $s$, or, more exactly, at the beginning of the period $[s,s+1)$. We define an intra-period assessment  in a similar manner by conditioning wrt $\mathcal F_{s_+}$:
\begin{equation}\label{eq:utilitybis}
	u^H_{s_+,t}(X)= \frac{\mathbf 1_{T>s}}{(1+\alpha)^{t-s} }\mathbb E_{\mathbb Q_s} \big [X    \prod_{\ell=s}^tq_{\ell}\gamma_{\ell} \big |\mathcal F_{s_+} \big ]
\end{equation}where we assume that the agent does not update the probabilities within a period.
\end{remark}
\begin{remark}\begin{itemize}
	\item[i.] Our utility functions are {\it monetary} because they satisfy: for $m\in \mathcal F_t$,  $u^H_{t,t}(m)= m$ for $t<T$ (see Delbaen \cite{Delb12} for more on such utility functions). 
	We adopt here a simplified numerical representation of preferences, convenient for computing the drug reservation price (cf. Subsection \ref{sec:multipsi}).

	\item[ii.] In our approach, the health status evolution acts as a stochastic discount factor. Indeed, the quantity $q_t\gamma_t/(1+\alpha)\in (0,1)$ `` discounts'' the consumption of period $[t,t+1)$, in order to be compared to the consumption of period $[t,t+1)$. 
\end{itemize}
\end{remark}

\begin{definition}Given a consumption flow  $\tilde c = (\tilde c_t)_{t\in\mathbb N}$, and  health status evolution $H= (H_t)_{t\in\mathbb N}$, the \textit{global utility} of the agent at any time $t\in\mathbb N$ is defined as as:
$$
U_t(\tilde c,H)=\sum_{k=0}^\infty u_{t,t+k}^H(\tilde c_k);
$$ and the global utility at the agent's diagnosis time is defined as:
$$
U_\tau(\tilde c,H):=\sum_{t=0}^\infty\sum_{k=0}^\infty \mathbf 1_{\tau\in(t,t+1]}u_{t_+,t+k}^H(\tilde c_k).
$$
\end{definition}
The interpretation is that $U_t$ measures the utility at time $t$ of the consumption in the remaining lifetime of the agent, $[t,T)$, while the utility at the diagnosis time $U_\tau$ measures the cumulative utility from the diagnosis time on.

\subsection{Health-impacting decisions}\label{sec:decisions}

We now formalise the idea that the agent may wish to some extent to ``exchange'' units of consumption for units of health. This can be achieved via some actions, such as buying insurance before diagnosis, or, alternatively, buying the treatment upon diagnosis.

A decision process $\delta= (\delta_{t})_{t\in\mathbb N}$ is a stochastic process $\{0,1\}$-valued and $(\mathcal F_t)_{t\in\mathbb N}$ adapted. Decisions impact the consumption process $(c^\delta_t)_{t\in\mathbb N}$ of the agent and their health status $(H^\delta_t)_{t\in\mathbb N}$, as detailed below.

For a decision process $\delta$, we use the convention $\delta_t=0$ for ``no action is taken''; $\delta_t=1$ for ``action is taken''. 
The interpretation is as follows:
\begin{itemize}
\item[-] If $\tau>t $ (i.e., before diagnosis), we set $\delta_t=1$ if the agent decides to buy insurance covering treatment costs for $\tau\in (t,t+1]$. Alternatively, $\delta_t=0$ means that the agent has no insurance covering the period $(t,t+1]$. 
\item[-] If $\tau\in (t-1,t]$ (i.e., upon diagnosis) we set $\delta_t=1$ if the agent decides to buy the treatment out of pocket. In this case, the consumption of the period $[t-1,t)$ is impacted. 

Alternatively, we set $\delta_t=0$, if the agent does not buy the treatment during the period $(t-1,t]$, despite a diagnosis. 

\item[-] Tf $\tau<t-1$ (i.e., in the periods starting after the diagnosis time)  no more decisions are considered. We set $\delta_t=0$ (this condition will be incorporated in the admissibility conditions, see below).
\end{itemize}
We denote by $ \mathcal D$ the set of admissible decisions, defined as all $(\mathcal F_t)$-adapted, $\{0,1\}$-valued processes $(\delta_t)$,  with $\delta_{t+1}\mathbf 1_{\tau\in (t,t+1]}$ being $\mathcal F_{t+}$-measurable (in other words, the decision of whether to buy or not the treatment is an inter-period decision, interpreted as being taken at the diagnosis time) and also satisfying $|\{\delta_t=1, \tau\leq t\}|=0$. This means that the agent buys the treatment out of pocket maximum one time, no later than in the period spanning the diagnosis time. If the treatment is purchased, the treatment is fully followed, possibly for several periods, with possible recovery or not, but no action is possibly taken afterward.\footnote{In practice, some patients may take have more than one year to decide on whether to start a certain treatment. Here, we assume that such agents are of negligible mass in the overall population, so that they do not influence the pricing game.}

We denote by, or rather interpret, $(c_t)_{t\in\mathbb N}$ (resp. $(H_t)_{t\in\mathbb N}$) as standing for the consumption process (resp. the health status) corresponding to the decision $\delta \equiv 0$ (no action, that is, no insurance, nor treatment). We recall that $c_t$ is also called the predictable consumption of the agent during the period $[t,t+1)$. For a decision process $\delta\in\mathcal D$, we pose the following representation of its corresponding consumption stream:
$$
c^{\delta}_t = c_t-\delta_t\tildepi_t -\delta_{t+1}\theta_t\mathbf 1_{\tau\in(t,t+1]}
$$
that is, whenever a susceptible agent takes an action to buy insurance at time $t$, the consumption of the period $[t,t+1)$ is reduced by the cost of insurance $\tildepi_t$, and by the cost of treatment $\theta_t$, just after diagnosis if an action is taken at the diagnosis time. Of course, it is not optimal for the agent to buy insurance and also pay the treatment out of pocket in the insurance window, so we end up having $\delta_t\delta_{t+1}\mathbf 1_{\tau\in[t,t+1)}=0$ for optimal decisions.

The health status of the agent is also influenced by the decision process, but only after diagnosis. We assume that the agent views the treatment as having a positive impact on the health status. Denote $(H^{(\tau,1)}_t)$ the health status of the agent when starting the treatment at the diagnosis time $\tau$; and assume it evolves as
$$
H^{(\tau,1)}_t=
\begin{cases}
H_t(1+\epsilon_1(1+\epsilon_2)^{\lceil \tau \rceil -t }),&\text{ if } t>\tau\text{ and either }\delta_{\lfloor \tau \rfloor} =1, \text{ or } \delta_{\lceil \tau \rceil}=1\\
H_t,&\text{ otherwise }.\\
\end{cases}
$$
where $\epsilon_i$ are nonnegative constants, characterizing the drug efficacy, as perceived by the agent. The constant $\epsilon_1$ quantifies the impact on the quality of life, while $\epsilon_2$ is the improvement of the survival chances in one period (other specifications are possible). Recall that $H_t$ is the health status without treatment. Also note the agent can start the treatment if the decision to buy insurance was made just before the diagnosis (i.e., at time $t \in \mathbb N$ satisfying $\tau \in (t,t+1]$), or by a post-diagnosis action to buy the treatment. 

Our aim is to understand optimal actions  at a point in time, by studying the  problem:
$$
V_t:= \text{ess} \sup_{\delta\in \mathcal D} U_t(c^\delta,H^\delta), \quad t\in\mathbb N,
$$
where the essential supremum is with respect to the subjective probability $ \mathbb Q_t$.

\subsection{Subjective probabilities for diagnosis}

\begin{lemma}\label{lem:utility} For any $t,k\in\mathbb N$, $k>1$ and $\delta\in \mathcal D$ that involves taking a treatment if $\tau\in(t,t+1]$, we have:
$$\mathbf 1_{\{\tau>t\}} u^{H^{\delta}}_{t,t+k}(c^\delta_{t+k})=\mathbf 1_{\{\tau>t\}} ( p_t u^{H^{(\tau,1)}}_{t,t+k}(c_{t+k})+ u^{H^\delta}_{t,t+k}(\mathbf 1_{\{\tau>t+1\}} c^\delta_{t+k})),
$$ and 
$$
u^{H^\delta}_{t_+,t+k}(\mathbf 1_{\{\tau\in(t,t+1]\}}c^\delta_{t+k})=\mathbf 1_{\{\tau\in(t,t+1] \}}u^{H^{(\tau,1)}}_{t,t+k}(c_{t+k}),
$$ 
where 
$
p_t: =\mathbb Q_t(\tau\in(t,t+1]|\tau>t).
$
Similarily, if the decision $\delta$ involves not taking the treatment if $\tau\in(t,t+1]$ same expressions hold, but with $H^{(\tau,1)}$ being replaced with $H$ on the right hand sides of the equalities.
\end{lemma}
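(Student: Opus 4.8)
The plan is to unwind the definition \eqref{eq:utility1} of the inter- and intra-period utilities, so that both displayed identities reduce to elementary manipulations of the conditional expectations $\mathbb E_{\mathbb Q_t}[\,\cdot\mid\mathcal F_t]$ and $\mathbb E_{\mathbb Q_t}[\,\cdot\mid\mathcal F_{t_+}]$. First I would record the measurability facts that drive everything: $\{\tau>t\}\in\mathcal F_t$, so $\mathbf 1_{\{\tau>t\}}$ commutes with $\mathbb E_{\mathbb Q_t}[\,\cdot\mid\mathcal F_t]$; $\{\tau\in(t,t+1]\}\in\mathcal F_{t_+}$ but is not $\mathcal F_t$-measurable, so it commutes only with $\mathbb E_{\mathbb Q_t}[\,\cdot\mid\mathcal F_{t_+}]$; and $\{\tau>t\}=\{\tau\in(t,t+1]\}\sqcup\{\tau>t+1\}$, so these two events partition $\{\tau>t\}$, the boundary $\{\tau=t+1\}$ being $\mathbb P$-null by continuity of the law of $\tau$. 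Next I would use $\delta\in\mathcal D$ together with $k>1$: on $\{\tau\in(t,t+1]\}$ the treatment period $\lceil\tau\rceil=t+1$ lies strictly before $t+k$, so admissibility of $\delta$ (no action once the treatment is taken) forces $c^\delta_{t+k}=c_{t+k}$ there, while, since $\delta$ prescribes treatment on this event, $H^\delta=H^{(\tau,1)}$, with the treatment multiplier $1+\epsilon_1(1+\epsilon_2)^{\lceil\tau\rceil-s}$ not depending (a.s.) on the realised $\tau\in(t,t+1]$ because $\lceil\tau\rceil=t+1$ there. Combining, $\mathbf 1_{\{\tau\in(t,t+1]\}}\,c^\delta_{t+k}\prod_{\ell=t}^{t+k}q^\delta_\ell\gamma^\delta_\ell=\mathbf 1_{\{\tau\in(t,t+1]\}}\,c_{t+k}\prod_{\ell=t}^{t+k}q^{(\tau,1)}_\ell\gamma^{(\tau,1)}_\ell$ (and with $H$ in place of $H^{(\tau,1)}$ in the no-treatment variant).

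For the intra-period identity, since $\mathbf 1_{\{\tau\in(t,t+1]\}}$ is $\mathcal F_{t_+}$-measurable it pulls out of $u^{H^\delta}_{t_+,t+k}$; inserting the display above, the remaining integrand is --- on that event --- a function of the economic information only, hence its $\mathbb Q_t$-conditional expectation given $\mathcal F_{t_+}$ agrees with the one given $\mathcal F_t$, the extra data in $\mathcal F_{t_+}$ (namely $\tau\wedge(t+1)$) being independent of it under $\mathbb Q_t$. This lands exactly on $\mathbf 1_{\{\tau\in(t,t+1]\}}\,u^{H^{(\tau,1)}}_{t,t+k}(c_{t+k})$, and on $\mathbf 1_{\{\tau\in(t,t+1]\}}\,u^{H}_{t,t+k}(c_{t+k})$ in the no-treatment case.

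For the inter-period identity I would start from $\mathbf 1_{\{\tau>t\}}u^{H^\delta}_{t,t+k}(c^\delta_{t+k})=u^{H^\delta}_{t,t+k}(\mathbf 1_{\{\tau>t\}}c^\delta_{t+k})$, split $\mathbf 1_{\{\tau>t\}}=\mathbf 1_{\{\tau\in(t,t+1]\}}+\mathbf 1_{\{\tau>t+1\}}$, and use linearity of $u^{H^\delta}_{t,t+k}$. The $\{\tau>t+1\}$-piece is verbatim the last term on the right-hand side (it already vanishes on $\{\tau\le t\}$, which is why the factor $\mathbf 1_{\{\tau>t\}}$ in front of it is harmless). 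For the $\{\tau\in(t,t+1]\}$-piece I insert the display above and then evaluate $\mathbb E_{\mathbb Q_t}\big[\mathbf 1_{\{\tau\in(t,t+1]\}}\,c_{t+k}\prod_{\ell=t}^{t+k}q^{(\tau,1)}_\ell\gamma^{(\tau,1)}_\ell\,\big|\,\mathcal F_t\big]$ by conditioning first on the economic information: the inner conditional probability of $\{\tau\in(t,t+1]\}$ is $p_t\,\mathbf 1_{\{\tau>t\}}$, by the definition $p_t=\mathbb Q_t(\tau\in(t,t+1]\mid\tau>t)$ and the independence, under $\mathbb Q_t$, of $\tau$ from the consumption/health path built into the product space $(\Omega_F\times\Omega_H,\widetilde{\mathcal F}\otimes\mathcal H)$; this factors out the scalar $p_t$ and leaves $\mathbf 1_{\{\tau>t\}}\,u^{H^{(\tau,1)}}_{t,t+k}(c_{t+k})$ (again with $H$ instead of $H^{(\tau,1)}$ in the no-treatment case).

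The step I expect to be delicate is the emergence of the scalar $p_t$: since $p_t$ is a number (a probability conditional on the event $\{\tau>t\}$, not on $\mathcal F_t$), factoring it out genuinely requires the $\mathbb Q_t$-law of $\tau$ restricted to $\{\tau>t\}$ to carry no information correlated with the economic path or with the post-diagnosis health trajectory --- that is, the conditional-independence structure of the model --- and it needs continuity of the law of $\tau$, which makes $\{\tau=t+1\}$ negligible, so that the partition of $\{\tau>t\}$ is exact and $H^{(\tau,1)}$ on the diagnosis window is well defined independently of the realised $\tau$. The remainder --- tracking which of the nested $\sigma$-fields $\mathcal F_t\subset\mathcal F_{t_+}\subset\mathcal F_{t+1}$ each indicator lives in, and applying the tower property in the right order --- is routine bookkeeping.
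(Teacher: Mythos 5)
Your proposal is correct and follows essentially the same route as the paper's proof: decompose $\{\tau>t\}$ into $\{\tau\in(t,t+1]\}\sqcup\{\tau>t+1\}$, use admissibility of $\delta$ (treatment bought at most once, at diagnosis) to replace $c^\delta_{t+k}$ by $c_{t+k}$ and $H^\delta$ by $H^{(\tau,1)}$ (or $H$) on the diagnosis window, and exploit the product structure of the filtration \eqref{eq:filtration} to factor out $p_t=\mathbb Q_t(\tau\in(t,t+1]\mid\tau>t)$. The "delicate step" you flag is exactly the one the paper handles via the representation $\mathbf 1_{\{\tau\in(t,t+1]\}}X=\mathbf 1_{\{\tau\in(t,t+1]\}}\widetilde X_t$ with $\widetilde X_t$ $\widetilde{\mathcal F}_\infty$-measurable and the property $\mathbb E_{\mathbb Q}[\widetilde X\mid\widetilde{\mathcal F}_t]=\mathbb E_{\mathbb Q}[\widetilde X\mid\mathcal F_t]$, so no changes are needed.
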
 
\begin{proof}

By definition of the filtration $(\mathcal F_t)$, (see \eqref{eq:filtration}), any $\mathcal F_\infty$-measurable, random variable $X$ satisfies $\mathbf{1}_{\{\tau\in(t,t+1]\}}X=\mathbf{1}_{\{\tau\in(t,t+1]\}}\widetilde X_t$, for some $\widetilde X_t$ that is $\widetilde{\mathcal F}_\infty$-measurable. Using this representation and assuming $X$ is bounded, we can write:
\begin{align*}
	& \mathbf 1_{\{\tau>t\}} \mathbb E_{\mathbb Q} [X |\mathcal F_{t} ] = \mathbb E_{\mathbb Q}[\mathbf 1_{\{\tau\in(t,t+1]\}} \widetilde X_t |\mathcal F_{t} ] + \mathbb E_{\mathbb Q} [\mathbf 1_{\{\tau>t+1\}} X |\mathcal F_{t} ]\\
	&= \mathbf 1_{\{\tau>t\}}(\mathbb Q(\tau\in(t,t+1]|\tau>t) \mathbb E_{\mathbb Q}[ \widetilde X |\widetilde{\mathcal F}_{t}] + \mathbb E_{\mathbb Q} [\mathbf 1_{\{\tau>t+1\}} X |\mathcal F_{t}])\\
	&= \mathbf 1_{\{\tau>t\}}(p_t\mathbb E_{\mathbb Q} [ \widetilde X |\mathcal F_{t} ] + \mathbb E_{\mathbb Q} [\mathbf 1_{\{\tau>t+1\}} X |\mathcal F_{t} ]).
\end{align*} In the last step we have used the property: for any $\widetilde X\in\widetilde{\mathcal F}_\infty$, $\mathbb E_{\mathbb Q}[ \widetilde X |\widetilde{\mathcal F}_{t} ] =\mathbb E_{\mathbb Q}[ \widetilde X |\mathcal F_{t} ] $ that also follows directly from the representation of the filtration $(\mathcal F_t)$ in \eqref{eq:filtration}).

We now apply these results to the consumption process. For all $\delta\in \mathcal D$: $\tau \in(t,t+1]\Rightarrow c^\delta_{t+k}=c_{t+k}, k\geq 1$ (by assumption, there are no health expenses occurring more than one period after a diagnosis, as treatment can be bought at most once, upon diagnosis). Hence, 
$$
\mathbf 1_{\{\tau>t\}} c^\delta_{t+k}= \mathbf 1_{\{\tau\in(t,t+1]\}}c_{t+k}+\mathbf 1_{\{\tau>t+1\}}c^\delta_{t+k}
$$
Also, for decisions $\delta\in \mathcal D$ that involve taking a treatment if $\tau\in(t,t+1]$, 
$$
\mathbf 1_{\{\tau>t\}} H^{\delta}_{t+k}= \mathbf 1_{\{\tau\in(t,t+1]\}} H_{t+k}(1+\epsilon_1(1+\epsilon_2)^{-k })+\mathbf 1_{\{\tau>t+1\}} H^{\delta}_t,\quad \forall k
$$
The result follows from the definition of the utility functions, i.e., replacing $\mathbb Q$ with the agent's time $t$ subjective probability, $\mathbb Q_t$. 
\end{proof}

\subsection{Utility without insurance} \label{sec:multipsi}
Let us first assume there is no insurance (or, alternatively, that the agent does not buy an insurance) and focus on the decision at the diagnosis time. Assume that $\tau\in(t,t+1]<T$.

In case the agent does not buy the treatment, $\delta_{t+1}=0$, the health status evolves according to the exogenously given process $H$.
Hence, if the agent decides to go without treatment; the global utility at the diagnosis time writes:
\begin{align*} 
\mathbf 1_{\tau\in(t,t+1]} U_{\tau }(c,H) &= \mathbf 1_{\tau\in(t,t+1]} \left\{\sum_{k=1}^\infty u^H_{t_+,t+k}(c_{t+k})+c_t\right\}\\
&=\mathbf 1_{\tau\in(t,t+1]} \left\{ \sum_{k=1}^\infty u^H_{t,t+k}(c_{t+k})+c_t\right\}.
\end{align*}
(we have used Lemma \ref{lem:utility} for the second equality). Alternatively, the agent can decide to start treatment after diagnosis: $\delta_{t+1}=1$. In this case, the global utility at the diagnosis time writes:
\begin{align*}
\mathbf 1_{\tau\in(t,t+1]}U_{\tau} (c-\theta \mathbf 1_t,H^{(\tau,1)}) 
& = \mathbf 1_{\tau\in(t,t+1]} \left\{\sum_{k=1}^\infty u^{H^{(\tau,1)}}_{t_+,t+k}(c_{t+k})+c_t-\theta_t \right\}\\
& =\mathbf 1_{\tau\in(t,t+1]}\left\{ (1+\epsilon_1) \sum_{k=1}^\infty u^H_{t,t+k}(c_{t+k})(1+\epsilon_2)^{-k} +c_t- \theta_t\right\}.
\end{align*}

\begin{definition}
The reservation price of the treatment upon diagnosis is defined by: 
\begin{align*} 
	\psi 
	&= \sum_{t=1}^\infty \psi(t)\mathbf 1_{\tau \in (t,t+1]},
\end{align*}where we denoted
$
\psi(t):=\sum_{k=1}^\infty u^H_{t,t+k}(c_{t+k})\epsilon_1(1+\epsilon_2)^{-k}.
$
\end{definition}Notice that the drug reservation price corresponds to the additional utility from treatment upon diagnosis: $$\psi =U_{\tau}(c,H^{(\tau,1)})- U_{\tau}(c,H).$$ It can be checked that $\psi$ is bounded by $ \frac{\epsilon_1}{\epsilon_2}||c||_{\infty}$.
The agent, assumed to act as a utility maximizer, decides to buy the treatment if in the period where the agents is diagnosed, the reservation price exceeds the drug price that is, if $\psi\geq \theta_{\lfloor \tau\rfloor}$. We denote the health evolution under optimal decision upon diagnosis by $(H^{(\tau,*)}_t)$, i.e.
\begin{align*} 
H^{(\tau,*)}_t=
\begin{cases}
	H_t,& \quad \text{for } \tau \geq t \text{ or } \tau < t, \psi<\theta_{\lfloor \tau\rfloor}\\
	H^{(\tau,1)}_t,& \quad \text{for } \tau < t, \psi\geq \theta_{\lfloor \tau\rfloor}.
\end{cases}
\end{align*}
and by $c^{(\tau, *)}$, the consumption flow under optimal decision in absence of insurance:
\begin{align}\label{eq:cstar}
c^{(\tau,*)}_t= c_t -\theta_t\mathbf 1_{\{\psi(t)\geq \theta_t\}\cap\{\tau \in( t,t+1]\}}.
\end{align}

\begin{remark}
We notice that $\{\psi(t)\geq \theta_t\}\in\mathcal F_t\subset \mathcal F_{t_+}$, and $\{\tau \in(t,t+1]\}\in \mathcal F_{t_+}$. Hence, the consumption $(c^{(\tau,*)}_t)$ is in general $(\mathcal F_{t_+})$-adapted, as required by Definition \ref{def:cons}.
\end{remark}

\subsection{Utility with insurance options for each period}
We now incorporate insurance coverage options, at each time $t<\tau\wedge T$. A coverage decision can only take place at the beginning of an insurance period, that is, at time $t\in\mathbb N$, based on the information $\mathcal F_t$.

If buying insurance, $\delta_t=1$, the consumption of the period $[t,t+1)$ is $c_t-\tildepi_t$. If moreover the insured agent receives a diagnosis by time $t+1$, the health status evolves according to $H^{(\tau,1)}$. Note that past  coverage decisions impact the consumption of the relative period, but they do not affect the health status before $\tau$. So, at time $t$, unless $\tau\in(t,t+1]$, an insurance coverage decision has no impact on subsequent periods. This fact simplifies greatly the analysis of the optimal  decision problem.

Let $\tilde \delta_{t+1}, \tilde \delta_{t+2},...$, be some arbitrary decision process after time $t$, assumed to be followed by the agent, if no diagnosis occurs in the current period. Let $\tilde H$ and $\tilde c$ be the corresponding health and consumption of the agent under the decision process $\tilde\delta$. Assuming $\tilde \delta_t=1$ (insurance is bought at time $t$), and by using Lemma \ref{lem:utility}, we get the global utility at time $t$:
\begin{align*} 
\tilde U^1_{t}:&= c_t-\tildepi_t+ \sum_{k=1}^\infty  (u^{H^{(\tau,1)}}_{t,t+k} (c_{t+k})p_t+u^{\tilde H}_{t,t+k} (\tilde c_{t+k}\mathbf 1_{\tau>t+1} ))\\
&= c_t-\tildepi_t+ \sum_{k=1}^\infty u^{H^{(\tau,1)}}_{t,t+k} (c_{t+k})p_t+U_t (\tilde c\mathbf 1_{\tau>t+1},\tilde H ).
\end{align*}
Alternatively, assume the agent does not buy insurance at time $t$, that is $\tilde \delta_t=0$. Also assume  that if $\tau\in (t,t+1]$ the agent implements the optimal strategy after time $\tau$, and otherwise, if $\tau>t$, the agent implements the decision $\tilde \delta$. This leads to the global utility:
\begin{align*} 
\tilde U^0_{t}:=& 
\begin{cases}
	u^H_{t,t}\left (c_t -\theta\mathbf 1_{\{\tau \in(t,t+1]}\right) +\sum_{k=1}^\infty u^{H^{(\tau,1)}}_{t,t+k} (c_{t+k})p_t +U_t (\tilde c\mathbf 1_{\tau>t+1},\tilde H ),& \text{ if } \theta_t \leq \psi(t)\\
	u^H_{t,t}\left (c_t\right) +\sum_{k=1}^\infty u^{H^{(\tau,0)}}_{t,t+k} (c_{t+k})p_t +U_t (\tilde c\mathbf 1_{\tau>t+1},\tilde H ),& \text{ if } \theta_t > \psi(t)
\end{cases}
\end{align*}

Note $u^H_{t,t}\left (c_t -\theta\mathbf 1_{\tau \in(t,t+1]}\right) = c_t-\theta_t p_t$. Under the assumptions above, the agent buys insurance at time $t$ if and only if $\tau>t$ and 
$\tilde U^1_{t}\geq \tilde U^0_{t}$,
that is:
\begin{align*}
\tildepi_t&\leq \theta p_t\mathbf 1_{\{\theta_t <\psi(t)\}}+ ( \sum_{k=1}^\infty (u^{H^{(\tau,1)}}_{t,t+k} (c_{t+k})-u^{H^{(\tau,0)}}_{t,t+k} (c_{t+k}))p_t\mathbf 1_{\{\theta_t >\psi(t)\}}=\left (\psi(t)\wedge \theta\right) p_t.
\end{align*}

In conclusion, assuming that the agent is susceptible at time $t$, the agent:
\begin{itemize}
\item[-] buys insurance at time $t$ if
$$
\tildepi_t \leq p_t \left(\psi(t)\wedge\theta_t\right);
$$ 
\item[-] if diagnosed during the time interval $(t,t+1]$, the agent (i) buys the treatment if $\psi(t)\geq \theta_t$ and (ii) goes without treatment when $\psi(t)<\theta_t$. 
\end{itemize}
We can now reconnect this setting with the one in Section \ref{sect:Customers}. We assume that at each time $t$, there is a probability distribution $\mu_t:[0, 1]\times \mathbb R_+\to[0,1]$, characterizing the population of susceptible agents at time $t$, $S_t$. The first dimension represents the subjective probability of getting a diagnosis during $(t,t+1]$, conditionally on being susceptible at time $t$, while the second dimension represents the drug reservation price. Exactly as in the one-period setting, we find that, given the couple $(\theta_t,\tildepi_t)$, the population splits in three categories at time $t$, according to their decisions: a proportion of agents that buy insurance coverage:
$$
\muA_t=\mu_t\left(\{(p, \psi)\in[0,1]\times \mathbb R_+: px>y, \; p\psi >y \}\right)|_{x=\theta_t,y=\tildepi_t}, 
$$ another one  that instead buy the treatment out of pocket if diagnosed during $(t,t+1]$:
$$
\muT_t= \mu_t\left(\{(p, \psi)\in[0,1]\times \mathbb R_+: \psi>x, \; px \le y\}\right)|_{x=\theta_t,y=\tildepi_t}
$$ and the remaining proportion of agents,  untreated in case of a diagnosis:
$$
\muO_t=\mu_t\left(\{(p, \psi)\in[0,1]\times \mathbb R_+: \psi\le x, \; p\psi \le y\}\right)|_{x=\theta_t,y=\tildepi_t}.
$$  The notation is similar to the one in Section \ref{sect:Customers}, so that we skip further details here.
\subsection{Population dynamics and the producer-insurer game}\label{subs:dynpop}
Now that  the population response is modeled, it is possible to define the producer-insurer game in the dynamic setting. It should be impacted by the health decisions of the agents, but also possibly by exogenous factors such as aging, or improvement in screening. We only sketch the main ideas below.

At each time $t$ (beginning of each period), the two big players set a price for the drug and insurance, respectively. The information available to the players at time $t$ consists in:
\begin{itemize}
\item[-] the size of the population of susceptible $S_t$, and the distribution $\mu_t$ of agents characteristics $(p,\psi)$;
\item[-] the disease incidence rate in the period $(t,t+1]$, as estimated at time $t$, denoted $r_t$.
\end{itemize} 

One may assume---for instance---a Markov process of the SIR (susceptible-ill-removed) type for the population dynamics, and a Markovian evolution for $(r_t,\mu_t)$ as well. We do not extend the modeling (it goes beyond our objectives). Our point in this appendix is simply to underline that our one-period decision model for the population, despite its simplicity, is scalable to a multi-period setting, using as state processes, $(S_t,\mu_t,r_t)$. Our results for one period remain relevant in more complex settings.

\end{document}